\def\diag{\text{diag}}
\def\sgn{\text{sgn}}
\def\sat{\text{sat}}
\def\twon #1{\|#1\|}
\newcommand{\mathletter}[1]{%
	\expandafter\newcommand\csname b#1\endcsname{\mathbb #1}
	\expandafter\newcommand\csname c#1\endcsname{\mathcal #1}
	\expandafter\newcommand\csname f#1\endcsname{\mathfrak #1}
	\expandafter\newcommand\csname til#1\endcsname{\widetilde #1}
	\expandafter\newcommand\csname ha#1\endcsname{\widehat #1}
	\expandafter\newcommand\csname bf#1\endcsname{\bf #1}
}%
\def\mathletters#1{\mathlettersB #1,,}
\def\mathlettersB#1,{\ifx,#1,\else\mathletter #1\expandafter\mathlettersB\fi}
\def \qed {\hfill \vrule height6pt width 6pt depth 0pt}
\def\bee{\begin{equation}}
	\def\ene{\end{equation}}
\def\beq{\begin{eqnarray}}
	\def\enq{\end{eqnarray}}
\def\bmatri{\begin{bmatrix}}
\def\ematri{\end{bmatrix}}
\newtheorem{defi}{Definition}
\newtheorem{lemma}{Lemma}
\newtheorem{prop}{Proposition}
\newenvironment{proof}{\begin{IEEEproof}}{\end{IEEEproof}}
\title{Coordinate-free Circumnavigation of a Moving Target via a PD-like Controller}
\author{Fei~Dong, Keyou~You,~\IEEEmembership{Senior Member,~IEEE}, Lihua~Xie,~\IEEEmembership{Fellow,~IEEE}, Qinglei~Hu,~\IEEEmembership{Senior Member,~IEEE} 
	\thanks{*This work was supported in part by the National Natural Science Foundation of China under Grant 62033006 and Grant 61960206011, in part by the Beijing Natural Science Foundation under Grant JQ19017, and in part by a grant from the Guoqiang Institute, Tsinghua University. ({\em Corresponding author: Keyou You})}
	 \thanks{F. Dong and K. You are with the Department of Automation, Beijing National Research Center for Information Science and Technology, Tsinghua University, Beijing 100084, China. E-mail: dongf17@mails.tsinghua.edu.cn, youky@tsinghua.edu.cn.}%
	 \thanks{L. Xie is with the School of Electrical and Electronic Engineering, Nanyang Technological University, Singapore 639798, Singapore. E-mail: elhxie@ntu.edu.sg.}%
	 \thanks{Q. Hu is with the School of Automation Science and Electrical Engineering, Beihang University, Beijing 100191, China. E-mail: $\text{huql\_buaa}$@buaa.edu.cn.}%
}
\begin{document}

\maketitle

\begin{abstract}
This paper proposes a coordinate-free controller for a nonholonomic vehicle to circumnavigate a fully-actuated moving target by using \emph{range-only} measurements. If the range rate is available, our Proportional Derivative (PD)-like controller has a simple structure as the standard PD controller, except the design of an additive constant bias and a saturation function in the error feedback. We show that if the target is stationary,  the vehicle asymptotically encloses the target with a predefined radius at an exponential convergence rate, i.e., an exact circumnavigation pattern can be completed. For a moving target, the circumnavigation error converges to a small region whose size is shown proportional to the maneuverability of the target, e.g., the maximum linear speed and acceleration. Moreover, we design a second-order sliding mode (SOSM) filter to estimate the range rate and show that the SOSM filter can recover the range rate in a finite time. Finally, the effectiveness and advantages of our controller are validated via both numerical simulations and real experiments.

\end{abstract}

\begin{IEEEkeywords}
Circumnavigation, PD-like controller, Moving target, Nonholonomic vehicle, Range-only measurement
\end{IEEEkeywords}

\section{Introduction}\label{sec1}

The target circumnavigation requires a mobile vehicle to enclose a target of interest at a stand-off distance to neutralize the target by restricting its movement \cite{wang2020stationary,kokolakis2021robust,olavo2018robust,oh2015coordinated,yoon2013circular,swartling2014collective}, which has been widely applied in both military and civilian applications for convoy protection or aerial surveillance purposes. The existing circumnavigation methods can be roughly categorized by the use of the state information of the vehicle or the target.

If the states (position, velocity, course, etc.) of both the vehicle and target are available, a Lyapunov guidance vector field (LGVF) method is proposed in \cite{Lawrence2003Lyapunov} and then extended in  \cite{Frew2008Coordinated,chen2013uav,Dong2019Flight}. Vector field methods are also proposed for the circular orbit tracking in \cite{Beard2014Fixed} and \cite{yao2021singularity}. Interestingly, the circumnavigation pattern can cover the moving path following (MPF) problem in \cite{Oliveira2016Moving}, which is resolved by designing a Lyapunov-based MPF control law and a path-generation algorithm.

For an uncooperative target, its state cannot be directly accessed by the tracking vehicle. In this case, the challenge lies in effectively estimating the target state via sensor measurements, such as ranges \cite{Shames2012Circumnavigation,swartling2014collective,cao2019relative}, bearings \cite{wang2020stationary,Deghat2014Localization}, or received signal strengths. For a stationary target, an adaptive localization algorithm is devised by using range-only measurements in \cite{Shames2012Circumnavigation} and a discrete-time observer is given in \cite{chai2014consensus} by using both the range and range rate  (the time derivative of range) measurements.  For a moving target, an adaptive motion estimator and a nonlinear filter are exploited in  \cite{Dobrokhodov2008Vision} and \cite{zhang2015Nonlinear}, respectively. Moreover, a Rao-Blackwellised particle filter is devised for a maneuvering target to simultaneously estimate its input and state in \cite{Dong2019Flight}. Note that the vehicle state is necessary to locate the target in the above mentioned works.

If neither the vehicle state nor the target state is available, e.g., the vehicle travels in complex underwater environments, a geometrical guidance law is designed in \cite{Cao2015UAV} by using a pair of a trigonometric function and an inverse trigonometric function, whose idea is to drive the vehicle towards a tangent point of an auxiliary circle. However, the control input is set as zero when the vehicle enters the circle, which may result in large overshoots. A biased proportional controller is proposed in \cite{zhang2020range} by using range rate measurements, which is also consistent with the bearing-only controller in \cite{park2016circling}. Moreover, a nonlinear PD controller is designed  in  \cite{Milutinovi2017Coordinate} for a state-space kinematic model which is composed of two continuous and one discrete state variables. Then, the control parameters depend on the maximum range of the controller operating space. Since the range-based controllers mentioned above are only concerned with the circumnavigation problem of a \emph{stationary} target,  how to extend to the  moving target  is unclear. 

There is no doubt that circumnavigating a moving target is more practical and significant. To this end, a sliding mode controller (SMC) is proposed in \cite{Matveev2011Range,matveev2017tight}. To eliminate the chattering phenomenon, they model the dynamics of actuator as the simplest form of the first-order linear differential equation. Yet, their approach cannot achieve zero steady-state error even for a stationary target, and requires the vehicle to start far away from the target. Anderson et al. \cite{Anderson2014A} devise a stochastic approach by further using relative angles. Shames {\em et al.} \cite{Shames2012Circumnavigation} show that the upper bound of the circumnavigation error is proportional to the maximum linear speed of the moving target,  which however uses the explicit position information of the vehicle. In sharp contrast, we can achieve the same results by using the {range-only} measurements. 
 
Specifically, we propose a PD-like controller to ensure that the vehicle can circumnavigate a moving target with range-only measurements, the idea of which is inspired from our previous work \cite{dong2019ICCA}. Indeed, the controller in \cite{dong2019ICCA} also has a PD-like form. Moreover, it investigates the case of a stationary target with both range and range rate measurements by using time-varying gains. Given that the target is stationary and both range and range rate measurements are known, our controller is shown to achieve the exact circumnavigating task at an exponential convergence rate. This implies that the closed-loop system is robust against small perturbations, which is essential to explicitly derive the upper bound of the circumnavigation error for the case of a time-varying target. Such a result is also consistent with \cite{Shames2012Circumnavigation} though the latter requires the position information of vehicle.  Moreover, the error bound can be further reduced by selecting proper control parameters, which are independent of the initial state. 

In some scenarios, the range rate may not be accessible to the vehicle. Since small noises may result in large or unbounded estimation errors, it is not effective to simply calculate the range rate by regular differentiating methods. To address it, a first-order filter and a washout filter are utilized in \cite{Target2018Guler} and \cite{Lin20163}, respectively, which unfortunately lack a rigorous justification. A second-order sliding mode (SOSM) filter is proposed in  \cite{Moreno2012Strict} and is adopted for the circumnavigating problem in \cite{Cao2015UAV} and \cite{zhang2020range} for a \emph{stationary} target. Similarly, we revise our controller into a range-only form by designing an SOSM filter, the estimation error of which converges to zero in a finite time if the initial distance to the target is sufficiently large and both the speed and acceleration of the target are bounded.

In a nutshell, the contributions of this paper are summarized as follows: 
\begin{itemize}
	\item[(a)] A PD-like controller is proposed to solve the circumnavigation problem by only using the range-based measurements, and is shown to complete the exact circumnavigation task at an exponential convergence rate if the target is stationary.
	\item[(b)] For a moving target, the steady-state circumnavigation error can be arbitrarily  reduced by increasing the P gain if there is no controller limit.
	\item[(c)] An SOSM filter is further designed to recover the range rate in a finite time by using range-only measurements for a moving target with bounded velocity and acceleration. 
\end{itemize}

The rest of this paper is organized as follows. In Section \ref{sec2}, the target circumnavigation problem is described in detail. In Section \ref{sec_controller}, we propose the PD-like controller to regulate the nonholonomic vehicle. If the explicit range rate is known, we prove the exponential convergence and derive the circumnavigation error bound for the moving target in Section \ref{sec3}.  We extend to the case without explicit range rate in Section \ref{sec5}. Both simulations and experiments are performed in Section \ref{secsim}. Some concluding remarks are drawn in Section \ref{sec6}.

\section{Problem Formulation}\label{sec2}

\begin{figure}[t!]
	\centering{\includegraphics[width=0.8\linewidth]{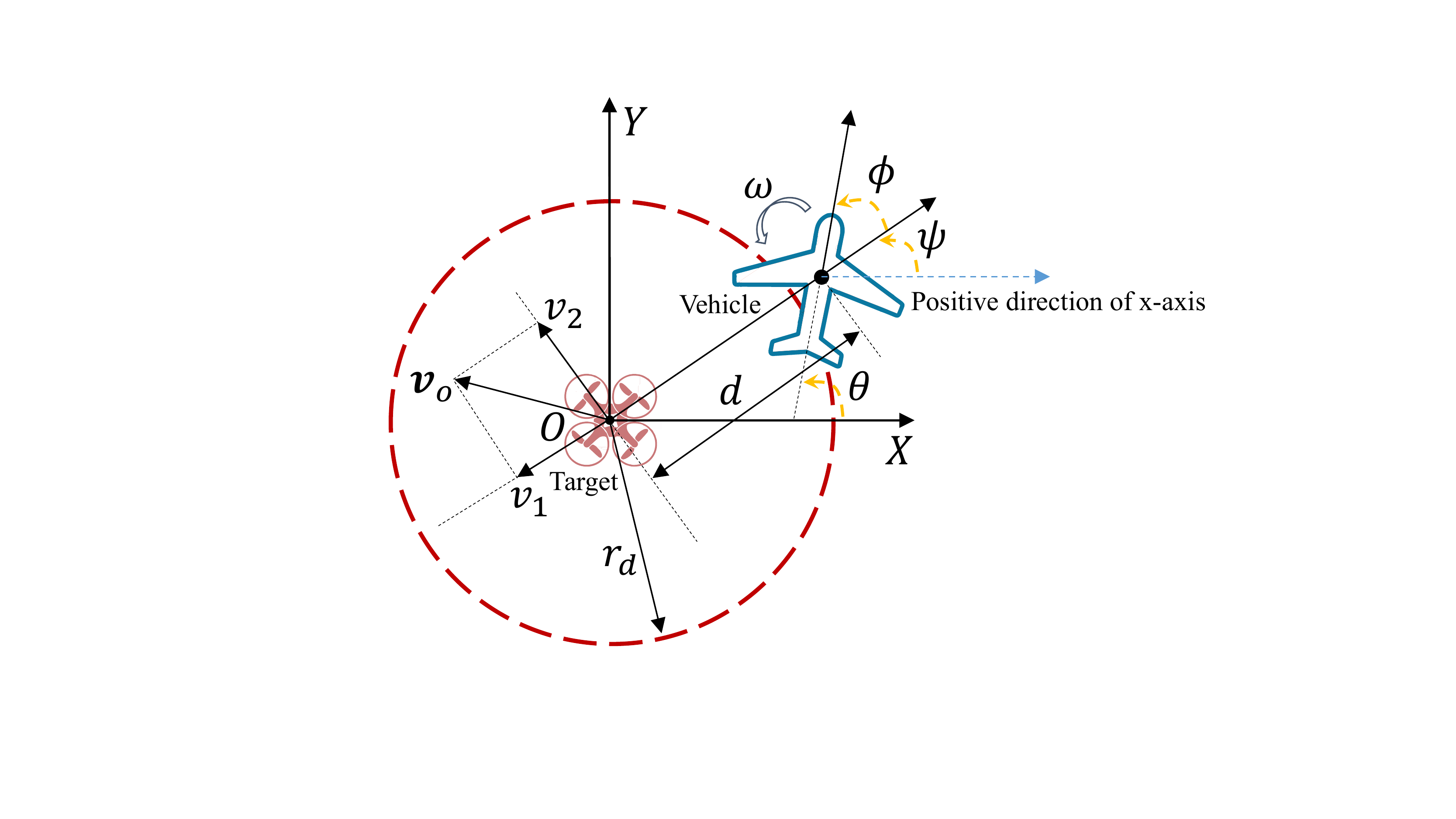}}
	\caption{Circumnavigation of a moving target.}
	\label{fig2}
\end{figure}
In Fig.~\ref{fig2}, we aim to track a moving target with double-integrator kinematics on a horizontal plane
\begin{align} \label{eqtar}
	\dot {\bm p}_o(t) = \bm v_o(t) ,~ \dot {\bm v}_o(t) = \bm a_o(t) ,
\end{align} 
by a nonholonomic vehicle  
\begin{align} \label{eqrob}
	\dot {\bm p}(t) = v\bmatri \cos \theta(t) \\  \sin \theta(t) \ematri, ~\dot \theta(t) = \omega(t) ,
\end{align}
where $\bm p_o(t)$, $\bm v_o(t)$, ${\bm a_o(t)}\in \mathbb{R}^2$ denote the position, linear velocity, and acceleration of the target, and $\bm p(t)\in \mathbb{R}^2,~\theta(t)$,~  $\omega(t)$, $v$ represent the position, heading course, adjustable angular speed, and constant linear speed of the vehicle, respectively.  If the target and the vehicle travel with different altitudes, e.g., an unmanned aerial vehicle (UAV) circles over a ground moving target, then Fig.~\ref{fig2} denotes the projection of the 3D trajectory of the UAV to the horizontal plane. In this work, only the (horizontal) range measurement from the vehicle to the target is available, i.e.,
\begin{align}\label{eqrange}
	d(t) = \twon{\bm p(t) - \bm p_o(t)}_2.
\end{align}
Neither the target position $\bm p_o(t)$ nor the vehicle position $\bm p(t)$ is known, which requires the controller to be coordinate-free. A notable example is that the vehicle travels in GPS-denied environments and the target is an uncooperative intruder.

Our objective is to design a coordinate-free controller\footnote{Coordinate-free  refers to that the controller is designed without  the position information of the vehicle.} via range-\emph{only} measurements $d(t)$ to drive the vehicle \eqref{eqrob} to circumnavigate the target \eqref{eqtar} with a predefined radius $r_d$. Mathematically, 
\begin{itemize}
	\item[(i)] if $\bm p_o(t) \equiv \bm p_o$ in \eqref{eqtar} is constant, it requires that
	\begin{align} \label{eqobj1}
		\lim_{t\rightarrow \infty} |d(t)-r_d |=\lim_{t\rightarrow \infty} |\dot d(t) |=0,
	\end{align}
	\item[(ii)] if both $\Vert \bm v_o(t) \Vert_2 \le \bar v_o<v$ and $\Vert \bm a_o(t) \Vert_2 \le \bar a_o$ in \eqref{eqtar} for all $t\ge t_0$, it requires that
	\begin{align} \label{eqobj2}
		\limsup_{t\rightarrow \infty} |d(t)-r_d |\le \epsilon,
	\end{align}
	where $\epsilon>0$ is a small constant that explicitly depends on $\bar v_o$ and $\bar a_o$.       
\end{itemize}
That is, the trajectory of the vehicle in \eqref{eqobj1} asymptotically forms an exact circle with the unknown target position ${\bm p}_o$ and  $r_d$ as its center and radius, respectively. In \eqref{eqobj2}, the trajectory is close to a circle with ${\bm p}_o(t)$ and $r_d$ as its moving center and radius. 

In \cite{Shames2012Circumnavigation},  the two objectives in \eqref{eqobj1} and \eqref{eqobj2} have been achieved by a \emph{single-integrator} vehicle but further using its exact position information $\bm p(t)$. By contrast, our coordinate-free controller does not need the position $\bm p(t)$ of the \emph{nonholonomic} vehicle.

\section{Controller Design}\label{sec_controller}
In this section, we propose two controllers. The first is the range-based controller by using both the range measurement $d(t)$ and its rate $\dot d(t)$. It has a PD-like form with a bias to eliminate the steady-state circumnavigation error. A similar idea has been presented in our preliminary work \cite{dong2019ICCA}, which however only investigates the case of a stationary target. The second is the range-only controller by further designing an SOSM filter to recover the range rate if both the linear speed and acceleration of the target are bounded. 

\subsection{The PD-like controller with explicit range rates}
To solve the circumnavigation problem,  define a relative tracking error 
\bee\label{errordef}
e(t)=\frac{d(t)-r_d}{r_d},
\ene
and a saturation function 
\begin{align*}
	\text{sat}(z)=
	\begin{cases}
		z,& ~\text{if}~ {|z| < 1},\\
		\sgn (z),& ~\text{if}~ {|z| \ge 1},
	\end{cases}
\end{align*}
where $\sgn(\cdot)$ is the standard sign function.

If the range rate $\dot d(t)$ is explicitly known, we propose the following PD-like \emph{range-based} controller 
\begin{align} \label{eq222}
	\omega(t)  = \omega_c + c_1 \dot e(t) + c_2 \sat \left( e(t)\right),
\end{align}
where $c_i$, $i\in\{1,2\}$ is a positive parameter to be designed and $\omega_c={v}/{r_d}$ is a bias to eliminate the steady-state circumnavigation error. Specifically, if $d(t) =r_d$ and $\dot d(t)=0$ at some time $t$, then $\omega(t)  = \omega_c$ is the desired angular speed of the vehicle, and if the target is stationary, the vehicle will keep this angular speed afterwards. 

The major difference from the standard PD controller lies in the use of the saturation function to ensure that the control parameters can be selected \emph{independent} of the initial state of the circumnavigation system. That is, if we remove the saturation function, we also need to restrict the initial state of $e(t)$ for a fixed $c_i$ since $|\dot e(t)|$ is bounded. Otherwise, the circumnavigation task may fail even for a stationary target. As the system is inherently nonlinear, one cannot expect to use a linear controller to \emph{globally} complete the circumnavigation task.  From this perspective, our controller in \eqref{eq222} is the ``simplest" one. 

For the case of a stationary target, we show in Proposition \ref{prop2} that the PD-like range-based controller in \eqref{eq222} can even achieve an exponential convergence with a fixed set of parameters for any initial condition. In comparison, the sliding mode approach in  \cite{Matveev2011Range}  cannot achieve exact circumnavigation, i.e., $e(t)$ cannot exactly converge to zero. The geometrical method in  \cite{Cao2015UAV} may result in large overshoots since there is no control input when the vehicle enters an auxiliary circle. Moreover, the control parameters in \cite{Milutinovi2017Coordinate} are determined by the maximum range of the controller operating space. Importantly, both controllers in  \cite{Cao2015UAV} and \cite{Milutinovi2017Coordinate} are only concerned with a stationary target, and it is confirmed by Fig.~\ref{fig12} in Section \ref{secsim} that their controllers cannot be adopted for the case of a moving target. Furthermore, the backstepping controller in \cite{dong2019Target} aims to steer the vehicle to follow a \emph{smooth} reference command from a stationary target. 

Since the PD-like controller in \eqref{eq222} only contains the range-based measurements from the vehicle to the target, it is particularly useful in GPS-denied environments and substantially different from \cite{Dobrokhodov2008Vision,Dong2019Flight,Oliveira2016Moving,zhang2015Nonlinear,Deghat2014Localization} as they further require the position information $\bm p(t)$. 

\subsection{The PD-like controller without explicit range rates}
If the range rate $\dot{d}(t)$ is unavailable, we adopt an SOSM filter \cite{Moreno2012Strict} to estimate it, i.e., 
\begin{equation} \label{eqfilter}
\left\{
\begin{split} 	
		\dot {\alpha}_1(t) &=  k_1 |d(t) -\alpha_1(t)|^{1/2} \sgn\left(d(t) -\alpha_1(t) \right)\\
		&~~~~+k_2(d(t) -\alpha_1(t) ) +  \alpha_2(t) \\
		\dot {\alpha}_2(t) & =k_3\sgn\left(d(t) -\alpha_1(t) \right)+k_4\left(d(t) -\alpha_1(t) \right)  
	\end{split},
\right.
\end{equation}
where $k_i$, $i\in\{1,2,3,4\}$ is a positive filter parameter to be designed. If both the linear speed and acceleration of the target are bounded, we show that there is a finite $T$ such that
$
d(t)-\alpha_1(t) = \dot d(t)-\alpha_2(t)=0, ~\forall t \ge t_0+T.
$

Thus, we can directly replace $\dot e(t)$ in \eqref{eq222} by $\alpha_2(t)/r_d$ and obtain the following PD-like \emph{range-only} controller 
\begin{align} \label{eq2}
	\omega(t)  = \omega_c  +  c_1/r_d \cdot \alpha_2(t) + c_2 \sat \left( e(t)\right).
\end{align}

\section {Moving Target Circumnavigation under the Range-based Controller} \label{sec3}
If the target is stationary, the PD-like range-based controller in \eqref{eq222} can achieve an exponential convergence with a fixed set of parameters for any initial condition. Otherwise, the upper bound of the circumnavigation error is explicitly shown to be proportional to the maximum linear speed and acceleration of the moving target. 

\subsection {Stationary target circumnavigation} \label{subsec3}
For a stationary target, i.e., $\bar v_o = 0$, let $\bm p_o$ be the \emph{unknown} position of the target.  Consider a polar frame centered at the target, we convert the kinematics in \eqref{eqrob} from the Cartesian coordinates into the following form
\begin{equation} \label{eq5}
	\begin{split}
		\dot d(t) &= v \cos \phi(t),\\
		\dot\phi(t) & = \omega(t) - \frac{v}{d(t)} \sin \phi(t) ,
	\end{split}
\end{equation}  
where the angle $\phi(t) \in (-\pi,\pi]$ is formed by the direction that the target points to the vehicle and the heading direction of the vehicle. See Fig.~\ref{fig2} for an illustration. By convention, the counter-clockwise direction is set to be positive. From Fig.~\ref{fig2}, we obtain that $\phi(t) = \theta(t)  - \psi(t)$,
where $\psi(t)$ is subtended by the direction from the target to the vehicle and the positive direction of $x$-axis.

Note that, $\phi(t)$ is not defined when $d(t)=0$. In light of Fig.~\ref{fig2}, the case $d(t)=0$ is a special one in which the vehicle goes directly through the target. Thus, we follow the definition in  \cite{Milutinovi2017Coordinate}. 

\begin{defi} \label{defi}	
	Suppose that there is a time instant $t_*>t_0$ such that $d(t_*)=0$, then the angle $\phi(t)$ just before hitting the target is $\phi(t_*^{-})=\pi$ and just after leaving the target is $\phi(t_*^{+})=0$.
\end{defi}

By Fig.~\ref{fig2},  one can easily observe that $[d(t),\phi(t)]'=[r_d, \pi/2]'$ is the desired state to achieve the objective of circumnavigation in the counter-clockwise direction, which is also an equilibrium of \eqref{eq5}. 

Now, we show that the closed-loop system in \eqref{eq5} with the range-based controller in (\ref{eq222}) is exponentially stable. 

\begin{prop} \label{prop2} Consider the circumnavigation system in (\ref{eq5}) under the PD-like range-based controller in (\ref{eq222}). Let $\bm x(t)=[d(t),\phi(t)]'$ and $\bm x_e =[r_d,\pi/2]'$.  If the control parameters are selected to satisfy that
	\begin{align} \label{eqc}
		{(c_1-1) \omega_c}> {c_2} ,
	\end{align}
	there exists a finite time instant $t_1\ge t_0$ such that 
	\begin{align*}
		\twon{\bm x(t)-\bm x_e} \le C\twon{\bm x(t_1)-\bm x_e} \exp\left(-\rho (t- t_1)\right), ~\forall t >  t_1,
	\end{align*}
	where $\rho$ and $C$ are two positive constants.  
\end{prop}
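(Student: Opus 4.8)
The plan is to analyze the closed-loop system in the polar coordinates $(d,\phi)$ and show that it enters a region where the saturation is inactive and the linearization is Hurwitz, then conclude exponential convergence. First I would substitute the controller \eqref{eq222} into \eqref{eq5}. Using $\dot e = \dot d/r_d = (v/r_d)\cos\phi = \omega_c \cos\phi$, the $\phi$-dynamics become $\dot\phi = \omega_c + c_1\omega_c\cos\phi + c_2\sat(e) - (v/d)\sin\phi$. The key structural observation is that when $\phi \approx \pi/2$ and $d \approx r_d$, the dominant terms $\omega_c - (v/d)\sin\phi$ nearly cancel (exactly cancel at the equilibrium), so the linearized system is governed by the PD-like feedback; condition \eqref{eqc}, namely $(c_1-1)\omega_c > c_2$, is exactly what is needed to place both eigenvalues of the $2\times 2$ linearization in the open left half-plane.

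The main work is a two-phase argument. In the first phase I would show that after a finite time $t_1$ the state is trapped in an invariant neighborhood $\mathcal{U}$ of $\bm x_e$ on which $|e(t)| < 1$ (saturation inactive) and $\phi(t)$ stays in an arc bounded away from $\{0,\pi\}$ (so $d(t)$ is bounded away from $0$, $\phi$ is well-defined, and $\cos\phi$, $\sin\phi$ behave monotonically enough). This is where I expect the main obstacle: establishing a forward-invariant trapping region and reaching it in finite time for \emph{arbitrary} initial data, including pathological cases such as $\phi$ crossing $0$ or $\pi$ or $d(t)$ hitting $0$ (handled via Definition~\ref{defi}). The saturation term is crucial here — with $|e|\ge 1$ the feedback contributes a constant $\pm c_2$ that, together with the bias $\omega_c$ and the bounded term $c_1\omega_c\cos\phi$, should drive $\phi$ so that $\dot d$ has a definite sign steering $d$ back toward $r_d$; a careful sign/monotonicity analysis of the flow on the cylinder $(d,\phi)$, possibly with a Lyapunov-like or comparison argument on $d$ and on the angular variable, is needed to show the trajectory cannot escape to $d\to\infty$ or orbit forever without entering $\mathcal{U}$. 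Condition \eqref{eqc} plausibly also plays a role in ruling out limit cycles outside $\mathcal{U}$.

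In the second phase, on $\mathcal{U}$ the system is smooth and autonomous with equilibrium $\bm x_e$, so I would compute the Jacobian at $\bm x_e$: with $\xi = d - r_d$, $\eta = \phi - \pi/2$, one gets $\dot\xi = -v\,\eta + O(|(\xi,\eta)|^2)$ and $\dot\eta = -(c_1\omega_c)\eta + (v/r_d^2 - c_2/r_d)\,\xi + O(|\cdot|^2)$ (signs to be checked), whose characteristic polynomial has positive coefficients precisely under \eqref{eqc} — the trace is $-c_1\omega_c < 0$ and the determinant is (a positive multiple of) $(c_1-1)\omega_c - c_2 > 0$ after using $\omega_c = v/r_d$. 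Hence the linearization is exponentially stable, and by the Lyapunov indirect method (or by a quadratic Lyapunov function for the linearization dominating the higher-order terms on a possibly shrunk $\mathcal{U}$) there exist $C,\rho>0$ with $\twon{\bm x(t)-\bm x_e}\le C\twon{\bm x(t_1)-\bm x_e}\exp(-\rho(t-t_1))$ for all $t>t_1$. Combining the two phases gives the claim; the finite-time entry $t_1$ is the one supplied by phase one, and $C,\rho$ come from phase two. I would present phase one as the technical heart of the proof and keep phase two as a standard linearization argument.
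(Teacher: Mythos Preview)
Your two-phase skeleton (finite-time entry into a neighborhood, then linearization) matches the paper's architecture, but your phase-two computation contains an error that propagates into a wrong diagnosis of where condition \eqref{eqc} is actually used. With $e=(d-r_d)/r_d$ and saturation inactive, $\partial\dot\phi/\partial d = c_2/r_d + (v/d^2)\sin\phi$, so at the equilibrium the Jacobian is
\[
F=\begin{bmatrix} 0 & -v \\[2pt] \dfrac{c_2+\omega_c}{r_d} & -c_1\omega_c \end{bmatrix},
\]
with trace $-c_1\omega_c<0$ and determinant $\omega_c(c_2+\omega_c)>0$ for \emph{any} $c_1,c_2>0$. The linearization is therefore Hurwitz unconditionally; your claim that the determinant is a positive multiple of $(c_1-1)\omega_c-c_2$ is incorrect (note the determinant of a matrix of the form $\bigl[\begin{smallmatrix}0 & -v\\ a & -c_1\omega_c\end{smallmatrix}\bigr]$ is $va$ and cannot involve $c_1$). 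This means phase two is \emph{not} where \eqref{eqc} does its work.

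Condition \eqref{eqc} is in fact the engine of phase one. At $\phi=\pi$ the closed loop gives $\dot\phi \le (1-c_1)\omega_c + c_2$, and forcing this negative is exactly $(c_1-1)\omega_c>c_2$; together with $\dot\phi>0$ at $\phi=0$ this traps $\phi$ in $[0,\pi]$ after finite time. The paper then proves global asymptotic stability on this strip via an explicit Lyapunov function of the form $\int_{r_d}^{d}\bigl(\tfrac{c_2}{v}\sat(\tfrac{\tau-r_d}{r_d})+\tfrac{1}{r_d}-\tfrac{1}{\tau}\bigr)\,d\tau + 1-\sin\phi$ plus LaSalle, and handles the awkward subregion $d<r_d$, $\phi\in(-\pi,0)$ by ruling out equilibria and closed orbits (Dulac's criterion and Poincar\'e--Bendixson). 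Your phase-one sketch (``careful sign/monotonicity analysis, possibly Lyapunov-like'') is pointed at the right place but is currently too vague, and because you believed \eqref{eqc} was a phase-two condition you do not yet see that it is precisely what makes the $\phi$-trapping and the subsequent Lyapunov bounds go through. Fix the Jacobian, relocate the role of \eqref{eqc}, and replace the hand-wave in phase one with the explicit Lyapunov/LaSalle argument; phase two then reduces to a routine application of the indirect method.
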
 
\begin{proof}
	See Appendix \ref{suba1}.
\end{proof}	

It is clear that the convergence rate to the equilibrium $\bm x_e$ is exponentially fast. Thus, small perturbations will not result in large deviations from the equilibrium  \cite[Chapter 9.2]{Khalil2002Nonlinear}. Interestingly, the selection of control parameters $c_i$, $i\in \{1,2\}$ is independent of $d(t_0)$ in light of \eqref{eqc}, which is in sharp contrast to \cite{Milutinovi2017Coordinate}.

\subsection{Moving target circumnavigation} \label{subsec4}
For a moving target, we decompose its forward velocity $\bm v_o(t)$ into $v_1(t)$ and $v_2(t)$, which denote the radial and tangential velocities of the target relative to the vehicle, respectively. See Fig.~\ref{fig2} for an illustration. Then the circumnavigation kinematics is given by 
\begin{equation} \label{eq19}
	\begin{split}
		\dot d(t) &= v \cos\phi(t) - v_1(t),\\
		\dot\phi(t) &= \omega(t) - \frac{v}{d(t)} \sin \phi(t) + \frac{v_2(t)}{d(t)}.
	\end{split}
\end{equation}

Now, we show that the circumnavigation error of the closed-loop system in \eqref{eq19} under the range-based controller in (\ref{eq222}) is bounded by a constant, which is proportional to the maximum linear speed and acceleration of the target.  Moreover, we can reduce the upper bound of the  circumnavigation error by properly increasing $c_i$, $i\in\{1, 2\}$.

\begin{prop} \label{prop_moving}
	Consider the target circumnavigation system in (\ref{eq19}) under the range-based controller in (\ref{eq222}). If $\Vert \bm v_o(t)\Vert_2 \le \bar v_o$, $\Vert \bm a_o(t)\Vert_2 \le \bar a_o$, and the control parameters are selected to satisfy that 
	\begin{gather}
		(c_1-1)\omega_c > c_2 + (c_1+1) \omega_o,\label{condition1} \\
		c_2 > \max\left\{(c_1+1)\omega_o,~ 2\omega_c+4\omega_o\right\} \label{condition2},
	\end{gather}
	where $\omega_o = \bar v_o/r_d$, then there is a positive $\epsilon>0$ of the form 
	\begin{equation}\label{kappa}
		\epsilon=\mathcal{O}\left(\frac{v+\bar{v}_o+\bar{a}_o}{c_2}+\frac{v+\bar{v}_o}{c_1}\right)
	\end{equation}
	and a positive constant\footnote{An explicit form of $T_1$ is given in the proof of Lemma \ref{lemma_moving}.} $T_1=T_1(c_1,c_2,\omega_c )$ such that $$
	\limsup_{t\rightarrow \infty} |d(t)-r_d |\le \epsilon$$  for all ${d(t_0)> 2 r_d+(v+\bar v_o) T_1}$. 
\end{prop}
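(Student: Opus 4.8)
The plan is to split the trajectory into a finite ``reaching'' phase, in which $d(t)$ is large and the feedback is saturated at $1$, and an ``ultimate boundedness'' phase in the region $\{|e|<1\}$ where the closed loop is smooth and the target's motion can be absorbed into a bounded disturbance. I expect Lemma~\ref{lemma_moving} to package these two steps, the explicit $T_1$ coming out of the first.

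\emph{Reaching phase.} While $d(t)>2r_d$ one has $e(t)>1$, so $\sat(e(t))\equiv1$ and \eqref{eq19} reduces to
\begin{align*}
\dot d &= v\cos\phi - v_1, \\
\dot\phi &= \omega_c + c_2 + \frac{c_1}{r_d}(v\cos\phi - v_1) - \frac{v}{d}\sin\phi + \frac{v_2}{d}.
\end{align*}
Since $d>2r_d$ gives $|(v/d)\sin\phi|<\omega_c/2$ and $|v_2/d|<\omega_o/2$, condition \eqref{condition2} (specifically $c_2>(c_1+1)\omega_o$) forces $\dot\phi>(\omega_c+\omega_o)/2>0$ on the arc $\{\cos\phi\ge0\}$, so $\phi$ leaves that arc within time $2\pi/(\omega_c+\omega_o)$ and is then attracted --- in an additional time bounded by a constant depending only on $(c_1,c_2,\omega_c)$ --- to a small neighbourhood of the quasi-equilibrium $\phi^\star\in(\pi/2,\pi)$ defined by $\cos\phi^\star=v_1/v-(\omega_c+c_2)/(c_1\omega_c)$; this $\phi^\star$ exists, satisfies $\cos\phi^\star<0$ and is locally attracting (as $\sin\phi^\star>0$) precisely because \eqref{condition1}--\eqref{condition2} keep $(\omega_c+c_2)/(c_1\omega_c)$ inside $(\omega_o/\omega_c,\ 1-\omega_o/\omega_c)$. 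On that neighbourhood $\dot d\le-\delta$ with $\delta=\Theta\big(r_d(\omega_c+c_2)/c_1\big)>0$. Calling $T_1=T_1(c_1,c_2,\omega_c)$ the total capture time, the hypothesis $d(t_0)>2r_d+(v+\bar v_o)T_1$ together with $|\dot d|\le v+\bar v_o$ keeps $d(t)>2r_d$ on $[t_0,t_0+T_1]$, so the reduced model stays valid; thereafter $\dot d\le-\delta$ drives $d$ monotonically down to $2r_d$ in finite time with $\phi$ staying near $\phi^\star$.

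\emph{Ultimate boundedness.} For $|e|<1$ the controller is $\omega=\omega_c+c_1\dot e+c_2 e$, and with $\xi:=\dot e=(v\cos\phi-v_1)/r_d$ the closed loop is smooth with $[e,\xi]=[0,0]$ (i.e.\ $[d,\phi]=[r_d,\pi/2]$) as stationary-target equilibrium. I would reuse the Lyapunov function behind Proposition~\ref{prop2} --- a damped-oscillator form such as $V=\frac12\xi^2+\frac12\omega_c(c_2+\omega_c)e^2+\lambda e\xi$ with small $\lambda>0$ --- for which the stationary terms give $\dot V\le-2\rho V$ under \eqref{eqc}, hence a fortiori under the stronger \eqref{condition1}. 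The target's motion enters $\dot V$ only through $v_1,v_2$ and $\dot v_1,\dot v_2$; using $|v_1|,|v_2|\le\bar v_o$ and $|\dot v_1|+|\dot v_2|\le 2\bar a_o+\mathcal O\big(\bar v_o(v+\bar v_o)/\underline d\big)$, where $\underline d>0$ is the lower bound on $d$ furnished by the invariance below, this contributes an additive term, so that $\dot V\le-2\rho V+\bar D\sqrt V$ with $\bar D=\mathcal O\big((v+\bar v_o+\bar a_o)/r_d^{2}\big)$. A comparison argument then yields $\limsup_t\sqrt{V(t)}\le\bar D/(2\rho)$, and tracking how the closed-loop gains depend on the parameters --- the $e$-stiffness $\omega_c(c_2+\omega_c)$ supplying the $1/c_2$ term and the $\dot e$-damping $c_1\omega_c$ the $1/c_1$ term --- turns this into $\limsup_t|d(t)-r_d|=r_d\limsup_t|e(t)|\le\epsilon$ with $\epsilon$ as in \eqref{kappa}. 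Finally one verifies that the sublevel set of $V$ reached at the end of the reaching phase lies in $\{|e|<1,\ d>0\}$ and is forward invariant (using $\dot e<0$ at entry and $\dot V<0$ outside the $\epsilon$-ball), which keeps $d$ away from $0$, preserves the polar model \eqref{eq19}, and legitimises the $\limsup$ bound.

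\emph{Main obstacle.} The delicate step is the reaching-phase claim: that $\phi$ is captured into a fixed neighbourhood of $\phi^\star$ within a time $T_1$ \emph{independent of} $d(t_0)$, and that $\dot d$ then has a uniform negative margin, all while $v_1(t),v_2(t),d(t)$ are merely bounded and time-varying rather than constant. This is exactly where \eqref{condition1}--\eqref{condition2} earn their keep: they force $\omega_c+c_2$ to dominate both the $(c_1/r_d)(v\cos\phi-v_1)$ drift on $\{\cos\phi\ge0\}$ and the $1/d$ perturbations, so $\phi$ can neither stall on $\{\cos\phi\ge0\}$ nor be trapped on the wrong (unstable) branch of $\cos\phi=v_1/v-(\omega_c+c_2)/(c_1\omega_c)$. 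A secondary nuisance is bookkeeping $\dot v_1,\dot v_2$ inside $\bar D$, which requires the a priori lower bound on $d$ from the invariance argument, so the two phases are mildly intertwined; neither difficulty, however, looks conceptually deep once the right Lyapunov function from Proposition~\ref{prop2} is in hand.
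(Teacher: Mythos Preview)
Your two--phase plan is plausible in spirit but it departs substantially from the paper's argument, and the departure creates a real gap. The paper does \emph{not} split into a saturated ``reaching'' phase followed by a linear ``ultimate boundedness'' phase. Instead it introduces the composite variable $z(t)=\dot e(t)+\tfrac{c_2}{c_1}\sat(e(t))$ and runs a single scalar Lyapunov analysis on $V_z=\tfrac12 z^2$ that is valid simultaneously for $d\ge 2r_d$ and for $r_d/2<d<2r_d$; this yields $\limsup|z|\le \epsilon_1/c_1$, and then a comparison lemma (Lemma~6.2 of \cite{Matveev2011Range}) converts that into $\limsup|d-r_d|\le \epsilon_1 r_d/c_2$. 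The $z$-variable is the key trick: it absorbs the saturation nonlinearity and makes the handoff you are worried about disappear entirely. The paper's Step~1 (and Lemma~\ref{lemma_moving}) is not a ``drive $d$ down to $2r_d$'' argument but an \emph{angle-trapping} step: it shows $\phi(t)$ enters and remains in an interval on which $\sin\phi(t)\ge\sin\phi_*>0$, and this lower bound is then used explicitly inside $\dot V_z$.

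The gap in your plan is precisely this angle-trapping. Your damped-oscillator Lyapunov function $V=\tfrac12\xi^2+\tfrac12\omega_c(c_2+\omega_c)e^2+\lambda e\xi$ is \emph{not} the Lyapunov function used to prove Proposition~\ref{prop2}; the paper's $V$ there is a non-quadratic function of $(d,\phi)$, and the exponential rate in Proposition~\ref{prop2} is obtained only \emph{locally} by linearisation, not from a global inequality $\dot V\le -2\rho V$. More importantly, for your quadratic to work you need to control $\dot\xi$, and a direct computation gives $\dot\xi=-\omega_c\sin\phi\,(c_1\xi+c_2 e)+\text{(bounded terms)}$: the damping carries a factor $\sin\phi$. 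Without first proving $\sin\phi\ge\sin\phi_*>0$ in the unsaturated region (you only sketch this while $d>2r_d$), the negative-definite part of $\dot V$ can degenerate and your ISS estimate collapses. Your invariance/handoff argument at $d=2r_d$ also presupposes knowing the state at crossing well enough to place it inside an invariant sublevel set that excludes both $d=0$ and re-saturation; this is exactly the bookkeeping that the $z$-variable sidesteps. If you want to salvage your route, you must add an angle-trapping step valid for all $d>r_d/2$ (as the paper does) before invoking any Lyapunov inequality in $(e,\xi)$.
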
 

\begin{proof}By \eqref{condition1}, we obtain that
	$c_1\omega_c>(c_1-1)\omega_c > c_2 + (c_1+1) \omega_o > c_2 + c_1 \omega_o$
			which jointly with $\omega_c=v/r_d$ and $\omega_o=\bar v_o/r_d$ implies $v>\bar v_o +c_2/c_1\cdot r_d$. Let $q_1 = {c_2/c_1\cdot r_d}$.  Then, there exists a  $v_*\in(0,v-\bar v_o -q_1)$ and $\phi_*$ such that
		\begin{align} \label{sinfun}
		\sin \phi_*=\left(1-((v_*+\bar v_o+q_1)/v)^2\right)^{1/2}.
		\end{align}

{\bf Step 1}: we show that there exists a finite time instant $t_1\ge t_0$ such that $\sin\phi(t)\ge\sin \phi_*>0 $ for all $t\ge t_1$.
		
	Let \begin{equation} 
	\label{ztdef}	z(t)=\dot e(t)+c_2/c_1\cdot \sat(e(t)).\end{equation}	
	By \eqref{eq19} and $z(t)=0$, it yields that 
	\begin{align*}
	\phi(t) = \arccos\left(\frac{v_1(t)-q_1\sat(e(t))}{v}\right),
	\end{align*}
	where $q_1$ is defined in \eqref{sinfun}. Then, inserting \eqref{eq222} into \eqref{eq19} leads to that
	\begin{equation} \label{eq63}
	\dot\phi(t)={c_1}z(t)+{v}/{r_d}  - {v}/{d(t)}\cdot \sin \phi(t) + {v_2(t)}/{d(t)}.
	\end{equation}
	
	(a) If $\phi(t_0) \in [\arccos( {(q_1+\bar v_o)/v}),\pi-\arccos((-v_*-\bar v_o-q_1)/v)]$, then $\phi(t)$ remains in this interval for all $t\ge t_0$.  
	When $\phi(t) = \arccos( (q_1+\bar v_o)/v)$, i.e., $z(t)/r_d=q_1+ q_1 \sat(e(t))+ \bar v_o -v_1(t)$, it follows from \eqref{condition2} and \eqref{eq63} that
	\begin{align*} 
	\dot\phi(t) 
	>&~ c_1/r_d\cdot (q_1/2) -\omega_c-2\omega_o>0.
	\end{align*}
	Similarly, $\phi(t) = \pi-\arccos((-v_*-\bar v_o-q_1)/v)$ leads to that
	$
	\dot\phi(t) 
	<- c_1/r_d \cdot v_* +\omega_c   + \omega_o<0.
	$ Since $\phi(t)$ is continuous in $t$, the result follows. 
	
	(b) If $\phi(t_0) \notin [\arccos( (q_1+\bar v_o)/v),\pi-\arccos((-v_*-\bar v_o-q_1)/v)]$, we show in Lemma \ref{lemma_moving} of Appendix \ref{suba2} that there is a finite $\delta>0$ such that $\phi(t_0+\delta) \in [\arccos( (q_1+\bar v_o)/v),\pi-\arccos((-v_*-\bar v_o-q_1)/v)]$.   
	
	Thus, there exists a finite time instant $t_1\ge t_0$ such that $\phi(t) \in [\arccos( {(q_1+\bar v_o)/v}),\pi-\arccos((-v_*-\bar v_o-q_1)/v)]$ for all $t\ge t_1$, i.e., $\sin\phi(t) \ge \sin \phi_*$.

{\bf Step 2}: we show the uniform boundedness of $z(t)$.  By {\bf Step 1}, there is no loss of generality to assume that $\sin\phi(t)\ge\sin \phi_*>0 $ for all $t\ge t_0$. 

Consider the following Lyapunov function candidate 
	\begin{align} \label{eqvz}
		V_z(z) = \frac{1}{2} z^2(t).
	\end{align}
	
	If $d(t)\ge 2r_d$, it follows from \eqref{eq222} and \eqref{eq19} that
	\begin{align*}
		\dot z(t)
		&= -\left( \omega_c - \frac{v}{d(t)} \sin \phi(t) + \frac{v_2(t)}{d(t)}\right) \omega_c\sin \phi(t)- \frac{\dot v_1(t)}{r_d}\\
		&~~~~  -{c_1 \omega_c}z(t) \sin \phi(t).
	\end{align*}
	Then, the time derivative of $V_z(z)$ leads to that
	\begin{align}\label{eqvz1}
		\dot V_z(z)
		&=- c_1\omega_c  z^2(t)\sin \phi(t) -z(t)\times \nonumber\\
		&~~~~ \left(\left( \omega_c - \frac{v\sin \phi(t)}{d(t)}  + \frac{v_2(t)}{d(t)}\right) \omega_c\sin \phi(t) + \frac{\dot v_1(t)}{r_d}\right)\nonumber\\
		&<-{c_1 \omega_c} z^2(t) \sin \phi(t)+|z(t)| \left(\omega_c \left( \omega_c+ \omega_o\right) + \frac{\bar a_o}{r_d}\right).
	\end{align}
	
	If $d(t)\in (r_d/2, 2r_d)$,\footnote{Note that the initial condition $d(t_0)>2r_d+(v+\bar{v}_o)T_1$ excludes the case $d(t)\le r_d/2$.} it similarly holds that 
	\begin{align}\label{eqvz2}
		\dot V_z(z) 
		&<|z(t)|\left(\omega_c \left( \omega_c +2\omega_o \right) + \frac{\bar a_o}{r_d}+ {c_2}\frac{v+\bar v_o}{c_1}\right)\nonumber \\
		&~~~~-{c_1 \omega_c} z^2(t)\sin \phi(t) . 
	\end{align}
	
	Overall, if $d(t)>r_d/2$, it follows from \eqref{eqvz1} and \eqref{eqvz2} that  
	\begin{align*}
		\dot V_z(z)<&~ |z(t)| \left(\omega_c \left( \omega_c+ 2\omega_o\right) + {\bar a_o}/{r_d}+ {c_2}(v+\bar v_o)/{c_1}\right)\\
		& -{c_1 \omega_c}z^2(t) \sin \phi_* .
	\end{align*}
	Thus, $\dot V_z(z)<0$ holds for all 
	\begin{align*}
		|z(t)|\ge \frac{\omega_c \left( \omega_c+ 2\omega_o\right) + \bar a_o/r_d+{c_2}(v+\bar v_o)/{c_1}}{{c_1\cdot \omega_c} \sin \phi_*}\triangleq\frac{\epsilon_1}{c_1},
	\end{align*}
	which means that $|z(t)|$ will be bounded by $\epsilon_1/c_1$. 
	
	{\bf Step 3}: we show the uniform boundedness of $e(t)$.  

By {\bf Step 2}, \eqref{ztdef} and Lemma 6.2 of \cite{Matveev2011Range}, it yields that 
	\begin{align} \label{kappa1}
		\limsup_{t\rightarrow \infty} |d(t)-r_d| \le {\epsilon_1 r_d}/{c_2}\triangleq \epsilon , 
	\end{align}
	where
	\begin{align*}
	 \epsilon
	=\frac{ v+ 2\bar v_o}{{c_2} \sin \phi_*} + \frac{\bar a_o}{c_2\omega_c \sin \phi_*} +\frac{r_d(v+\bar v_o)}{c_1\omega_c \sin \phi_*}.
	\end{align*}	
	Since $v_*$ in \eqref{sinfun} depends only on the control parameters in the form of $c_2/c_1$, then \eqref{kappa} follows from \eqref{kappa1}. 
\end{proof}

By \eqref{kappa},  the steady-state circumnavigation error $\epsilon$ is proportional to the maneuverability of the target, and can be made small by increasing the control parameters $c_1$ and $c_2$. If the target is stationary, i.e. $\bar{v}_o=\bar{a}_o=0$,  then $c_1$ and $c_2$ can be selected properly large, in which case the steady-state circumnavigation error is close to zero.   

	\subsection{The input saturation}
	If the vehicle in \eqref{eqrob} has a controller limit, i.e. $|\omega(t)|\le \bar \omega$, the limit $\bar \omega$ should be large enough to maintain the circumnavigation pattern, which is quantified below. 
	
	\begin{lemma}\label{lemmasat}
		The vehicle in \eqref{eqrob} can maintain the distance $r_d$ from the target \eqref{eqtar} if and only if 
		\begin{equation} \label{contrlimit}
		v \bar \omega  \ge  (v+\bar v_o)^2/r_d + \bar a_o.   
		\end{equation}
	\end{lemma}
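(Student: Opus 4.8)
The plan is to work directly in the polar coordinates of \eqref{eq19} and ask when the pair $(d(t),\phi(t))$ can be held \emph{exactly} at $d\equiv r_d$. Maintaining the distance $r_d$ means $d(t)\equiv r_d$ on some time interval, hence $\dot d(t)\equiv 0$ and therefore, from the first equation of \eqref{eq19}, $v\cos\phi(t)=v_1(t)$ for all such $t$. Differentiating this identity and also using the second equation of \eqref{eq19} with $d=r_d$ gives an explicit expression for the required $\omega(t)$ in terms of $\phi$, $v_1$, $v_2$ and $\dot v_1$; the ``only if'' direction then amounts to bounding this expression, while the ``if'' direction amounts to showing that under \eqref{contrlimit} one can actually choose an admissible $\omega(t)$ (and a consistent initial $\phi$) realizing $v\cos\phi=v_1$.

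First I would set up the necessity part. On an interval where $d\equiv r_d$ we have $\cos\phi(t)=v_1(t)/v$, so $v_1(t)\in[-\bar v_o,\bar v_o]$ forces $|\cos\phi|\le\bar v_o/v<1$ and in particular $\sin\phi(t)\ne0$. Differentiating $v\cos\phi = v_1$ yields $-v\sin\phi(t)\,\dot\phi(t)=\dot v_1(t)$, and substituting $\dot\phi(t)=\omega(t)-\tfrac{v}{r_d}\sin\phi(t)+\tfrac{v_2(t)}{r_d}$ gives
\begin{align*}
\omega(t) = \frac{v}{r_d}\sin\phi(t) - \frac{v_2(t)}{r_d} - \frac{\dot v_1(t)}{v\sin\phi(t)}.
\end{align*}
Taking absolute values and using $|\sin\phi|\le1$, $|v_2|\le\|\bm v_o\|_2\le\bar v_o$, $|\dot v_1|\le\|\bm a_o\|_2\le\bar a_o$ together with the worst-case lower bound $|\sin\phi|\ge\sqrt{1-(\bar v_o/v)^2}$ (attained when $v_1$ saturates), one gets that the supremum over admissible target motions of the required $|\omega(t)|$ equals $(v+\bar v_o)/r_d + \bar a_o/(v\sqrt{1-(\bar v_o/v)^2})$; multiplying by $v$ and simplifying $v\sqrt{1-(\bar v_o/v)^2}=\sqrt{v^2-\bar v_o^2}$ — and then observing that the tightest scenario combines the tangential term and the centripetal-type term constructively, which reorganizes into $(v+\bar v_o)^2/r_d+\bar a_o$ — recovers exactly \eqref{contrlimit}. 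The ``only if'' is the claim that if \eqref{contrlimit} fails then there is some admissible $\bm v_o,\bm a_o$ for which no admissible $\omega$ keeps $d\equiv r_d$; this follows by picking the target motion that attains the worst case above.

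For sufficiency I would argue that if \eqref{contrlimit} holds, then for the given target trajectory we can solve $v\cos\phi(t_0)=v_1(t_0)$ for an initial $\phi(t_0)$ with $\sin\phi(t_0)>0$ (take the counter-clockwise branch), set $d(t_0)=r_d$, and define $\omega(t)$ by the displayed formula above; the bound just derived shows $|\omega(t)|\le\bar\omega$ for all $t$, so this $\omega$ is admissible, and by construction the resulting closed loop keeps $\dot d\equiv0$, i.e. $d(t)\equiv r_d$. One should check that $\sin\phi(t)$ stays bounded away from zero along this solution so the formula for $\omega$ never blows up — but this is automatic, since $\cos\phi(t)=v_1(t)/v$ and $|v_1(t)|\le\bar v_o<v$ pins $\sin\phi(t)$ into $[\sqrt{1-(\bar v_o/v)^2},1]$ for the chosen branch.

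The main obstacle I expect is getting the \emph{exact} constant right in the worst-case bound: the naive triangle-inequality estimate $v|\omega|\le v^2/r_d+\bar v_o+v\bar a_o/\sqrt{v^2-\bar v_o^2}$ is weaker than \eqref{contrlimit}, so one has to be careful that the radial and tangential velocity components of a single vector $\bm v_o$ with $\|\bm v_o\|_2\le\bar v_o$ cannot both be as large as $\bar v_o$ simultaneously, and to track how the choice of $\phi$ that maximizes $|\omega|$ interacts with the constraint $\cos\phi=v_1/v$. Handling this optimization over $\phi$ and over the decomposition $\bm v_o=(v_1,v_2)$ jointly — rather than bounding each term separately — is what produces the clean square $(v+\bar v_o)^2/r_d$, and is the step that requires genuine care rather than routine manipulation.
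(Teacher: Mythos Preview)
Your overall strategy---freeze $d\equiv r_d$, differentiate the constraint $v\cos\phi=v_1$, and read off the required $\omega$---is the natural one, and it is also what the paper has in mind (the paper simply invokes \cite[Proposition~3.1]{Matveev2011Range} and omits the computation). The concrete gap in your argument is the inequality $|\dot v_1|\le\|\bm a_o\|_2\le\bar a_o$, which is false and is precisely the reason you cannot get the constant $(v+\bar v_o)^2/r_d$ to appear and are left with the hand-waving ``reorganizes into'' step.

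Recall that $v_1=\langle\bm v_o,\hat r\rangle$ with $\hat r=(\bm p-\bm p_o)/d$; since $\hat r$ rotates with the line of sight, $\dot v_1=\langle\bm a_o,\hat r\rangle+\langle\bm v_o,\dot{\hat r}\rangle$. On the orbit $d\equiv r_d$ one computes $\dot{\hat r}=\tfrac{v\sin\phi-v_2}{r_d}\,\hat r^{\perp}$, hence
\[
\dot v_1=a_1+\frac{(v\sin\phi-v_2)\,v_2}{r_d},\qquad a_1:=\langle\bm a_o,\hat r\rangle.
\]
Substituting this into your expression for $\omega$ collapses it to the clean identity
\[
v\,\omega\,\sin\phi=\frac{(v\sin\phi-v_2)^2}{r_d}-a_1,
\]
i.e.\ the centripetal acceleration of the \emph{relative} motion minus the radial target acceleration. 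Necessity is then immediate: at the realizable configuration $\sin\phi=1$, $v_2=-\bar v_o$, $a_1=-\bar a_o$ one gets $v\omega=(v+\bar v_o)^2/r_d+\bar a_o$ exactly, so $v\bar\omega$ must dominate this. No joint optimization or mysterious ``reorganization'' is needed---the square $(v+\bar v_o)^2$ appears directly once the frame-rotation term in $\dot v_1$ is retained. Your naive triangle-inequality bound $v|\omega|\le v(v+\bar v_o)/r_d+v\bar a_o/\sqrt{v^2-\bar v_o^2}$ is neither equal to nor in general comparable with \eqref{contrlimit}, so that route cannot be salvaged by rearrangement.
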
 	
	\begin{proof}
		The circumnavigation pattern is maintainable if and only if there exists a finite time instant $t_1\ge t_0$ such that $e(t)=\dot e(t)=0$ for all $t \ge t_1$. By \eqref{errordef}, \eqref{eq19} and the proof of \cite[Proposition 3.1]{Matveev2011Range},  the inequality in \eqref{contrlimit} is not difficult to establish. The details are omitted due to  space limitation.
	\end{proof}
	
	Thus,  the maximal feasible acceleration of the vehicle should be greater than that of the target plus the maximal radial acceleration, which is resulted from the rotation of the vehicle relative to the target at the radius $r_d$. In this regard, it is reasonable to assume that the controller limit $\bar \omega$ always satisfies the strict inequality in \eqref{contrlimit}. Then,  we revise the range-based controller in  \eqref{eq222} as follows
	\begin{align}\label{eq_sat}
	\omega_s(t)= \bar \omega\cdot \sat\left( \frac{1}{\bar \omega} \left( {\omega_c} + c_1 \dot e(t) + {c_2}\sat(e(t)) \right) \right). 
	\end{align}

	Since
	$\max_{t\ge t_0} |\omega(t)| = \omega_c + c_1(v+\bar v_o) + c_2$, the controller in \eqref{eq_sat} will never saturate if $c_1$ and $c_2$ are selected to satisfy $\omega_c + c_1(v+\bar v_o) + c_2<  \bar \omega$.  Jointly with \eqref{kappa},  the tracking error is essentially bounded by $\mathcal{O}(1/\bar \omega)$.

\section{Moving Target Circumnavigation under the Range-only Controller} \label{sec5}
If the range rate $\dot d(t)$ is not explicitly available, we adopt the SOSM filter in \eqref{eqfilter} to obtain the range-only controller \eqref{eq2}. A similar idea can also be found in \cite{Cao2015UAV}, which however only focuses on the \emph{stationary} target circumnavigation. Note that a first-order filter and a washout filter are adopted in \cite{Target2018Guler} and \cite{Lin20163}, respectively.  

\begin{prop} \label{prop4} Consider the circumnavigation system in (\ref{eq19}) under the PD-like range-only controller in (\ref{eq2}).  If $\Vert \bm v_o(t) \Vert_2 \le \bar v_o$, $ \Vert \bm a_o(t) \Vert_2\le \bar a_o$,  the parameters of the filter \eqref{eqfilter} and controller (\ref{eq2}) satisfy that
	\begin{gather*} 
		k_1 > 2\sigma_2, ~k_2 >\sigma_2^2 + 2\sigma_2,\\
		k_3>\max\left\{0,~ (k_1 +1)\sigma_1/k_1 -k_1^2/2,~\sigma_1-2k_1^2 -k_1^2/(2k_2)\right\},\\
		k_4 > \max\left\{0, ~k_2/2-k_2^2,~k_2^2(2k_1 + 5\sigma_1)/(k_1-2\sigma_2)\right\}\\
		(c_1-1)\omega_c > c_2 + (c_1+1) \omega_o, \\
		c_2 > \max\left\{(c_1+1)\omega_o,~ 2\omega_c+4\omega_o\right\},
	\end{gather*}
	where $\sigma_1= 2 \omega_c v + c_1 \omega_c v +c_2 v + \omega_c \bar v_o + \bar a_o$ and $\sigma_2=c_1 \omega_c$, and ${d(t_0)> 2r_d  + (v+\bar v_o)(T_1+T_2)}$ with a positive constant\footnote{The explicit form of $T_2$ is given in the proof of Proposition \ref{prop4}.} $T_2$, then it holds that
	\begin{align*}
		\alpha_1(t)=d (t)~\text{and}~ \alpha_2(t)=\dot d(t),~ \forall t > t_0+T_2.
	\end{align*}
	
	Moreover, $ \limsup_{t\rightarrow \infty} |d(t)-r_d |\le \epsilon $ where $\epsilon$ and $T_1$ are given in Proposition \ref{prop_moving}.
\end{prop}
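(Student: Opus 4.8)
The plan is to prove the proposition in two stages: (i) show that the SOSM filter \eqref{eqfilter} reconstructs $d(t)$ and $\dot d(t)$ \emph{exactly} after a finite time $T_2$, and (ii) observe that for $t\ge t_0+T_2$ the range-only controller \eqref{eq2} becomes identical to the range-based controller \eqref{eq222}, so that Proposition \ref{prop_moving}, applied from the shifted initial instant $t_0+T_2$, yields the claimed bound $\limsup_{t\to\infty}|d(t)-r_d|\le\epsilon$.

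For stage (i) I would introduce the filter errors $\zeta_1(t)=d(t)-\alpha_1(t)$ and $\zeta_2(t)=\dot d(t)-\alpha_2(t)$; subtracting \eqref{eqfilter} from the derivatives of $d$ gives the generalized super-twisting error system $\dot\zeta_1=\zeta_2-k_1|\zeta_1|^{1/2}\sgn(\zeta_1)-k_2\zeta_1$, $\dot\zeta_2=\ddot d(t)-k_3\sgn(\zeta_1)-k_4\zeta_1$, in which $\ddot d(t)$ is the exogenous perturbation. The key point, following \cite{Moreno2012Strict}, is that finite-time convergence $\zeta_1\equiv\zeta_2\equiv 0$ holds provided the perturbation obeys a triangular bound $|\ddot d(t)|\le\sigma_1+\sigma_2|\zeta_2(t)|$ and the gains $k_1,\dots,k_4$ satisfy the stated inequalities, which are exactly the sufficient conditions produced by the strict Lyapunov function of \cite{Moreno2012Strict} for such a perturbation class. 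To obtain this bound I would first note $|\dot d(t)|=|v\cos\phi(t)-v_1(t)|\le v+\bar v_o$, so that the hypothesis $d(t_0)>2r_d+(v+\bar v_o)(T_1+T_2)$ forces $d(t)>2r_d$ on $[t_0,t_0+T_2]$; then, on this window, differentiate $\dot d=v\cos\phi-v_1$, substitute $\dot\phi$ from \eqref{eq19} and $\omega=\omega_c+(c_1/r_d)(\dot d(t)-\zeta_2(t))+c_2\sat(e(t))$ from \eqref{eq2}, and bound each term using $|\sin\phi|\le 1$, $d\ge r_d$, $|\dot d|\le v+\bar v_o$, $\|\bm a_o\|_2\le\bar a_o$ and $\omega_c=v/r_d$, $\omega_o=\bar v_o/r_d$; this gives precisely the $\sigma_1,\sigma_2$ of the statement, the coefficient $\sigma_2=c_1\omega_c$ coming from the $(c_1/r_d)\zeta_2$ term of the control entering through $v\sin\phi$. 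Since $\sigma_1,\sigma_2$ and hence $T_2$ are fixed by the system/controller/filter constants alone, and the initial filter error is bounded ($\zeta_1(t_0)$ is at our disposal, e.g.\ zero, and $|\zeta_2(t_0)|\le v+\bar v_o$), the settling-time estimate of \cite{Moreno2012Strict} supplies the explicit constant $T_2=T_2(k_1,\dots,k_4,\sigma_1,\sigma_2,v+\bar v_o)$ appearing in the hypothesis, and $\alpha_1(t)=d(t)$, $\alpha_2(t)=\dot d(t)$ for all $t\ge t_0+T_2$.

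For stage (ii), on $t\ge t_0+T_2$ we have $\alpha_2(t)=\dot d(t)$, hence $(c_1/r_d)\alpha_2(t)=c_1\dot e(t)$ and \eqref{eq2} coincides with \eqref{eq222}. Using $|\dot d|\le v+\bar v_o$ once more, $d(t_0+T_2)\ge d(t_0)-(v+\bar v_o)T_2>2r_d+(v+\bar v_o)T_1$, so the distance at the new initial instant $t_0+T_2$ satisfies the hypothesis of Proposition \ref{prop_moving}; since the parameter conditions \eqref{condition1}--\eqref{condition2} are assumed, that proposition applied from $t_0+T_2$ delivers $\limsup_{t\to\infty}|d(t)-r_d|\le\epsilon$ with the same $\epsilon$ and $T_1$.

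The main obstacle is the derivation of the triangular perturbation bound in stage (i): one must simultaneously (a) guarantee $d(t)$ stays bounded away from zero over the window $[t_0,t_0+T_2]$ so that the $v/d$ and $v_2/d$ terms in $\dot\phi$ from \eqref{eq19} remain controlled — this is exactly what forces the large-initial-distance assumption — and (b) track how the unknown filter error $\zeta_2$ re-enters $\ddot d$ through the controller \eqref{eq2}, so that its coefficient in the bound is precisely $\sigma_2=c_1\omega_c$ and is absorbed by the $k_i$-conditions of \cite{Moreno2012Strict}. The remaining steps — the reduction to Proposition \ref{prop_moving} and the time shift — are routine.
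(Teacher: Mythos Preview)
Your proposal is correct and follows essentially the same route as the paper: define the filter errors, derive the triangular bound $|f|\le\sigma_1+\sigma_2|\zeta_2|$ on the perturbation (valid while $d(t)>r_d$, which the large-initial-distance hypothesis guarantees on $[t_0,t_0+T_2]$ via $|\dot d|\le v+\bar v_o$), apply the strict Lyapunov function of \cite{Moreno2012Strict} to obtain finite-time convergence with $T_2=2V_\Omega^{1/2}(\bm\zeta(t_0))/\gamma$, and then invoke Proposition~\ref{prop_moving} from $t_0+T_2$. The only cosmetic difference is that the paper writes out the quadratic Lyapunov function $V_\Omega(\bm\xi)=\bm\xi'\Omega\bm\xi$ and the associated matrices $Q_1,\dots,Q_4$ explicitly---precisely to verify that the nonstandard $\sigma_2|\zeta_2|$ term (fed back through the controller) is indeed dominated under the stated $k_i$-inequalities---whereas you outline this computation and attribute the mechanics to \cite{Moreno2012Strict}; the argument and the resulting constants are the same.
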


\begin{proof}
	In light of Proposition \ref{prop_moving}, we complete the proof by showing that $\alpha_1(t)=d (t)~\text{and}~ \alpha_2(t)=\dot d(t)$  
	for any $t\ge t_0+T_2$, where $T_2$ is finite. Then, the circumnavigation system \eqref{eq19} exactly works as the case of using the explicit range rate $\dot d(t)$ after $t_0+T_2$. 

	To this end, we define the estimation error as $$\xi_1(t) = d(t)-\alpha_1(t)~\text{and}~\xi_2(t)=\dot d (t)-\alpha_2(t).$$
	Then it follows from  \eqref{eqfilter} and \eqref{eq19}  that
	\begin{equation} \label{eqxi}
		\begin{split}
			\dot \xi_1(t) =& -k_1|\xi_1(t)|^{1/2}\sgn\left(\xi_1(t)\right) -k_2 \xi_1(t) + \xi_2(t), \\ 
			\dot \xi_2(t)  =&  -v\sin \phi(t) \left( {\omega}(t) - \frac{v \sin \phi(t)} {d(t)}+ \frac{v_2(t)}{d(t)}\right) - \dot v_1(t) \\
			& - k_3\sgn\left( \xi_1(t) \right)- k_4 \xi_1(t),
		\end{split}
	\end{equation}
	where ${\omega}(t)= \omega_c+ {c_1}/{r_d}\cdot \left(\dot d(t) -\xi_2(t)\right) + c_2 \sat \left( e(t)\right).$
	Let $$f(\xi_1,\xi_2,t) = -v\sin \phi(t) \left({\omega}(t) - \frac{v \sin \phi(t)} {d(t)}+ \frac{v_2(t)}{d(t)}\right) - \dot v_1(t).$$
	If $d(t) > r_d$, it follows from \eqref{eq2} and \eqref{eq19} that
	\begin{align} \label{eqf}
		\left|f(\xi_1(t),\xi_2(t),t)\right| 
		& < \sigma_1 + \sigma_2 |\xi_2(t)| ,
	\end{align}
	where $\sigma_1 =\omega_c (2v+\bar v_o) + c_1 \omega_c(v+\bar v_o) +c_2 v + \bar a_o$ and $\sigma_2 =c_1 \omega_c $. 
	
	
	Let $\bm \xi(t) = \left[|\xi_1(t)|^{1/2}\sgn(\xi_1(t)), ~\xi_1(t), ~\xi_2(t)\right]'$ and consider the following Lyapunov function candidate 
	\begin{align*}
		V_{\Omega}(\bm \xi) = \bm \xi' \Omega \bm \xi ~\text{where}~\Omega = \frac{1}{2}\begin{bmatrix}
			4 k_4+k_1^2 & k_1 k_2 & -k_1 \\
			k_1 k_2 & 2 k_4 + k_2^2 & -k_2 \\
			-k_1 & -k_2 & 2
		\end{bmatrix}.
	\end{align*}
	Clearly, $V_{\Omega}(\bm \xi)$ is continuous, positive definite, and radially unbounded if $k_3 > 0$ and $k_4>0$, i.e., 
	\begin{align}\label{eqV}
		\lambda_{\min}(\Omega) \twon{\bm \xi}_2^2\le V_{\Omega}(\bm \xi) \le \lambda_{\max}(\Omega) \twon{\bm \xi}_2^2, 
	\end{align}
	where $\lambda_{\min}(\Omega)$ and $\lambda_{\max}(\Omega)$ are the minimum  and maximum eigenvalues of $\Omega$, respectively. 
	
	Taking the derivative of $V_{\Omega}(\bm \xi)$ along with \eqref{eqxi} leads to that
	\begin{align*}
		\dot V_{\Omega} (\bm \xi) &=- |\xi_1(t)|^{-1/2}\bm \xi' Q_1 \bm \xi -\bm \xi' Q_2 \bm \xi + 2 \xi_2(t) f(\xi_1,\xi_2,t) \\
		&~~~~- \left(k_2 \xi_1(t)  + k_1 |\xi_1(t)|^{1/2}\sgn(\xi_1(t))\right) f(\xi_1,\xi_2,t),
	\end{align*}
	where 
	\begin{align*}
		Q_1 &= \frac{k_1}{2} \begin{bmatrix}
			2 k_3+k_1^2 & 0 & -k_1 \\
			0 & 2k_4 + 5k_2^2 & -3k_1 \\
			-k_1 & -3k_2 & 1
		\end{bmatrix}, \text{and}\\
		Q_2 &= {k_2} \begin{bmatrix}
			k_3+2k_1^2 & 0 & 0 \\
			0 & k_4 + k_2^2 & -k_2 \\
			0 & -k_2 & 1
		\end{bmatrix}.
	\end{align*}
	Together with \eqref{eqf}, we have that 
	\begin{align*}
		&| 2 \xi_2(t) f(\xi_1,\xi_2,t) |\le 2 |\xi_2(t)|\left( \sigma_1 + \sigma_2 |\xi_2(t)|  \right) \\
		&~~~~\le  { \sigma_1 |\xi_1(t)|^{-1/2}} \left(  |\xi_1(t)|  +  \xi_2^2(t)  \right) + 2 \sigma_2 \xi_2^2(t),\\
		& |- k_2 \xi_1(t) f(\xi_1,\xi_2,t)|  \le k_2 | \xi_1(t)| \left( \sigma_1 + \sigma_2 |\xi_2(t)|  \right) \\
		&~~~~\le k_2 \sigma_1 | \xi_1(t)|  +  \left( k_2^2\xi_1^2(t) + \sigma_2^2\xi_2^2(t) \right)/2,\\
		& |-k_1 |\xi_1(t)|^{1/2}\sgn(\xi_1(t))f(\xi_1,\xi_2,t)| \\
		 &~~~~\le k_1 \sigma_1 |\xi_1(t) |^{-1/2}  |\xi_1(t) | ^2 + \left(k_1^2 |\xi_1(t)| + \sigma_2^2 \xi_2^2(t) \right)/2.
	\end{align*}
	Then, it immediately follows that
	\begin{align*}
		\dot V_{\Omega} (\bm \xi) \le  -{|\xi_1(t)|^{-1/2}} \bm \xi' (Q_1-Q_3) \bm \xi  -\bm \xi' (Q_2-Q_4) \bm \xi, 
	\end{align*} 
	where 
	\begin{align*}
		Q_3 &= \begin{bmatrix}
			(k_1+1) \sigma_1 & 0 & 0 \\
			0 & 0 & 0 \\
			0 & 0 & \sigma_1
		\end{bmatrix}, \\
		Q_4 &= \begin{bmatrix}
			k_2 \sigma_1 + k_1^2/2 & 0 & 0 \\
			0 & k_2^2/2      & 0 \\
			0 & 0 & 2 \sigma_2 +\sigma_2^2
		\end{bmatrix} .
	\end{align*}
	If $k_1 > 2 \sigma_1$, $k_2>0$, $k_3>\max\left\{0, ~(k_1 +1)\sigma_1/k_1 -k_1^2/2\right\} $, and $k_4>k_2^2(2k_1 + 5\sigma_1)/(k_1-2\sigma_2)$,  then $Q_1-Q_3$ is positive definite. Similarly, the conditions $k_1 >0$, $k_2 >\sigma_2^2 + 2\sigma_2 $, ~$k_3>\max\left\{0,~\sigma_1-2k_1^2 -k_1^2/(2k_2)\right\} $ and $k_4>\max\left\{0,~ k_2/2-k_2^2\right\}$ lead to that $Q_2-Q_4$ is positive definite. 
	
	Overall, the conditions on controller and filter parameters ensure that 
	\begin{align}\label{eqvdot}
		\dot V_{\Omega} (\bm \xi) &\le  -{|\xi_1(t)|^{-1/2}} \bm \xi' (Q_1-Q_3) \bm \xi  \nonumber\\
		 & \le  -{|\xi_1(t)|^{-1/2}}  \lambda_{\min} (Q_1-Q_3) \Vert\bm \xi \Vert_2^2.
	\end{align}
	
	In virtue of \eqref{eqV}, (\ref{eqvdot}), and the fact $|\xi_1(t)|^{1/2} \le \twon{\bm \xi }_2 \le {V_{\Omega}^{1/2}(\bm \xi)}/{\lambda_{\min}^{1/2}(\Omega)}$, it yields that 
	\begin{align*}
		\dot V_{\Omega} (\bm \xi) &\le  -\frac{\lambda_{\min}^{1/2}(\Omega)}{V^{1/2}(\bm \xi)}  \lambda_{\min} (Q_1-Q_3) \Vert\bm \xi \Vert_2^2\le -\gamma V_{\Omega}^{1/2} (\bm \xi),  
	\end{align*}
	where 
	\begin{align} \label{eqgamma}
		\gamma = {\lambda_{\min}^{1/2}(\Omega) \lambda_{\min} (Q_1-Q_3)}/ {\lambda_{\max}(\Omega)}.
	\end{align}
	By the comparison principle \cite[Lemma 3.4]{Khalil2002Nonlinear}, we obtain that 
	\begin{align*}
		\alpha_1(t) = d(t) ~\text{and}~\alpha_2(t) = \dot d(t), ~\forall t> t_0+T_2, 
	\end{align*}
	where $T_2= 2 V_{\Omega}^{1/2}(\bm \xi(t_0))/\gamma$ and $\gamma$ is given in \eqref{eqgamma}. Thus, the range-only controller in \eqref{eq2} is exactly identical to the range-based controller in \eqref{eq222} for all $t> t_0+T_2$ if $d(t_0+T_2) > 2r_d+(v+\bar v_o)T_1$. The rest of the proof follows from that of Proposition \ref{prop_moving}.  \end{proof}

\section{Simulations and Experiments} \label{secsim}

For brevity, we denote the states of the target and vehicle by $\bm s_o(t) = [\bm p'_o(t),\bm v_o(t)]'$ and $\bm s(t)= [\bm p'(t),\theta(t)]'$.   
To obtain the relative range in actual experiments, we respectively adopt an on onboard ultra-wideband (UWB) sensor in Section \ref{simu2} and use global positions to compute \eqref{eqrange} in Section \ref{simu1}. Moreover, we incorporate our proposed controller \eqref{eq2} into the control system of \cite{small} for a 6-DOF fixed-wing UAV and further take the noisy measurements into account in Section \ref{simu5} to validate the effectiveness of the SOSM filter \eqref{eqfilter}. Finally, comparisons with the existing methods are carried out in Section \ref{simu3}. 


\subsection{Experiments with a differential steering vehicle}\label{simu2}

\begin{figure}[t!]
	\centering{\includegraphics[width=0.7\linewidth]{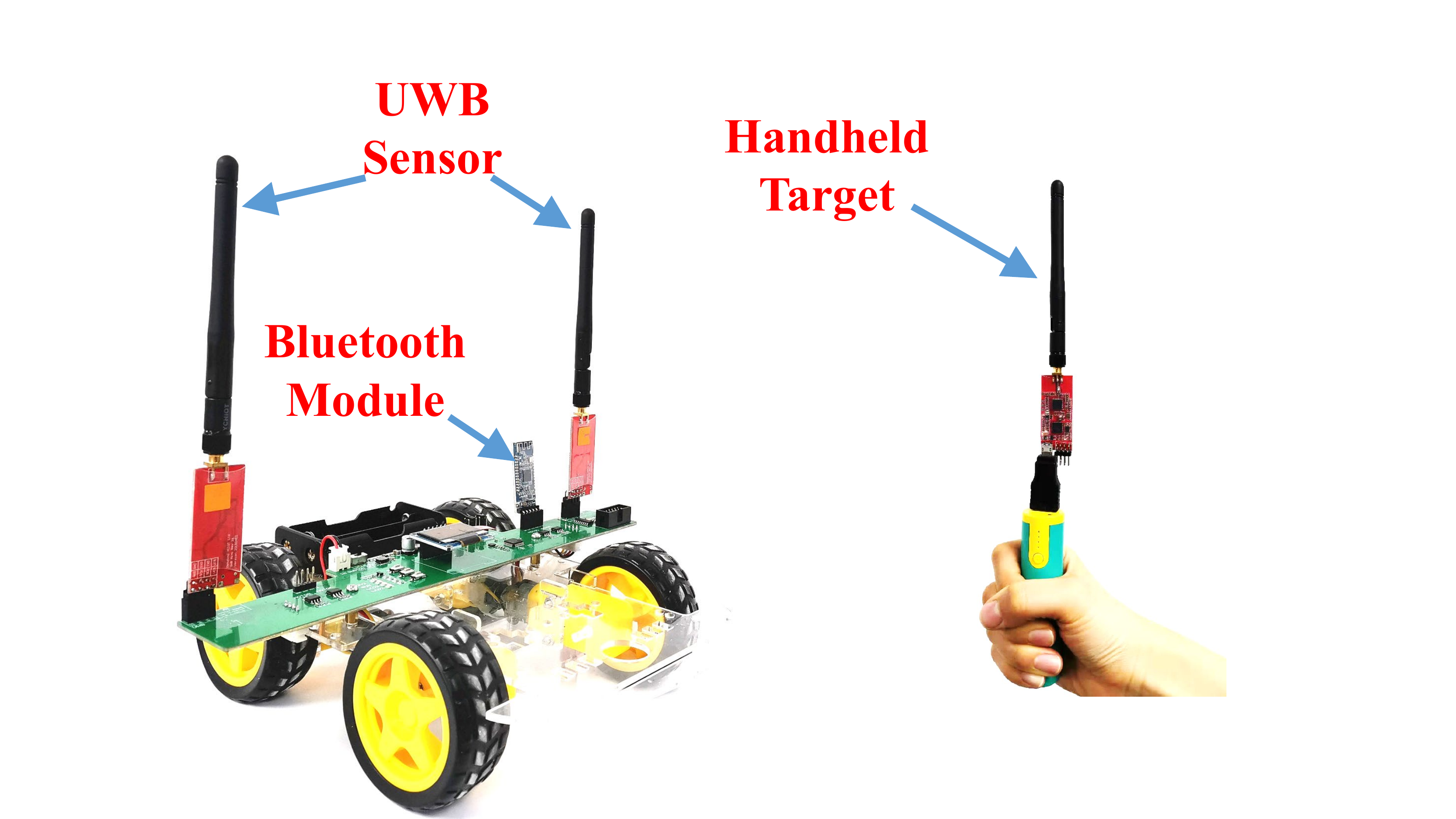}}
	\caption{The DSV equipped with an onboard UWB sensor and the handheld target to be tracked.}
	\label{fig25}
\end{figure}
\begin{figure}[t!]
	\centering{\includegraphics[width=0.8\linewidth]{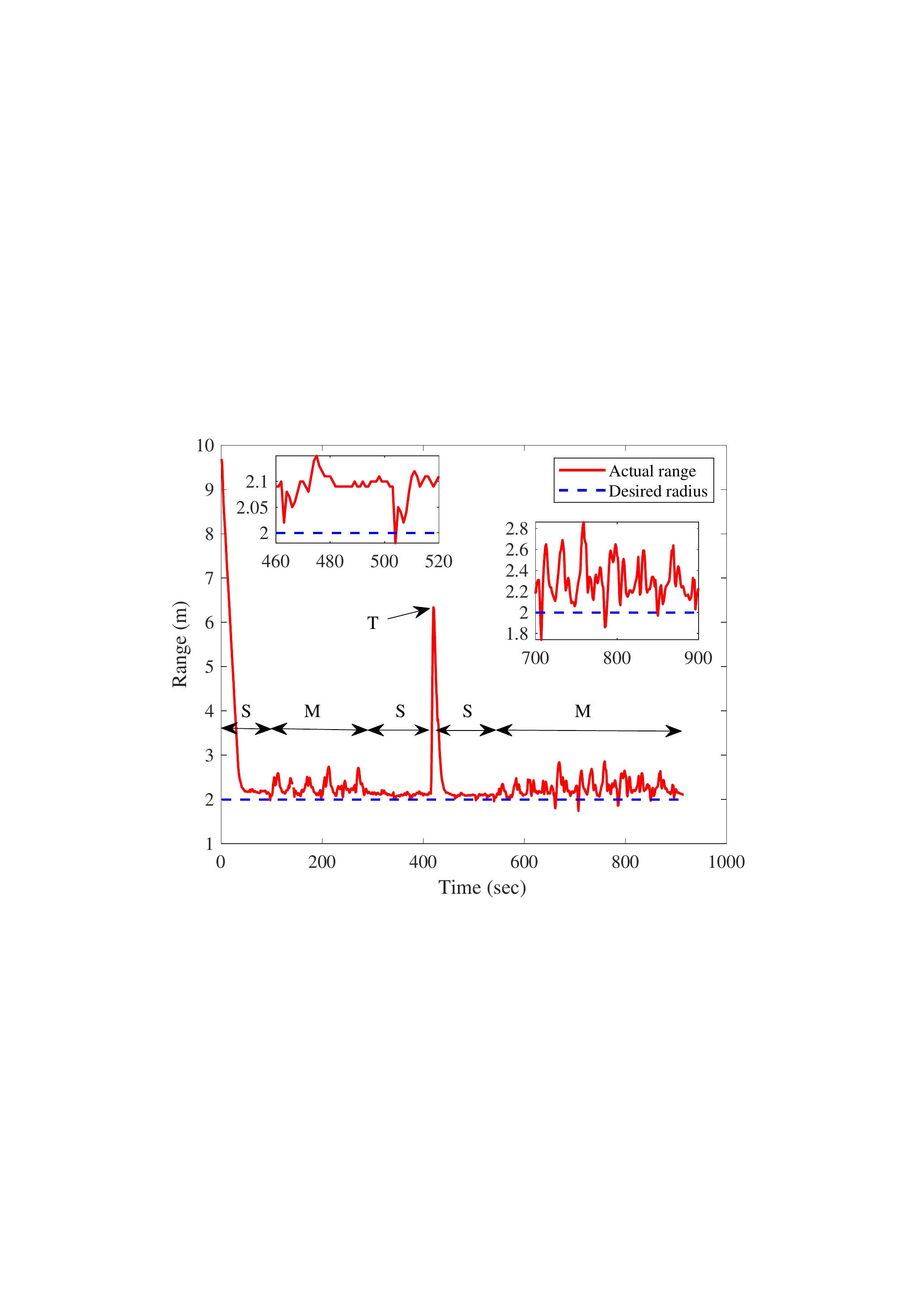}}
	\caption{Experimental result generated by the DSV, where ``S", ``M", and ``T" denote the different statuses of the handheld target with the following notations: stationary, slowly moved, and suddenly translocated, respectively.}
	\label{fig26}
\end{figure}

In this subsection, we adopt the differential steering vehicle (DSV) in Fig.~\ref{fig25} to test the proposed controller \eqref{eq222}, where an onboard UWB sensor measures the range to a handheld target at a frequency of 10 \si{Hz} and a Bluetooth module sends the real-time measurements at 1 \si{Hz}. The received data is visualized in Fig.~\ref{fig26}, from which we can observe that the DSV approaches the stationary target (denoted by ``S") from a far away position and then slides on a circular orbit between 0 and 100 \si{s}. Then the DSV can keep circumnavigating the target while it is slowly moved (denoted by ``M") from 100 to 270 \si{s} and from 560 \si{s} to the end. Although the target is suddenly translocated (denoted by ``T") at 420 \si{s}, the DSV immediately returns to the circular orbit with the new target position as its center and the original radius $r_d=2$ \si{m}. The above observations show the potential effectiveness of \eqref{eq222} in real applications.   

\subsection{Experiments with a Racecar and an omnidirectional vehicle}\label{simu1}

From Fig.~\ref{fig27}, a Racecar and a DJI Robomaster play the roles of tracker and target, respectively. A vision-based motion capture system is used to measure the current positions of the tracker and target by the markers at a frequency of 120 \si{Hz}. Then the positions are subsequently transformed into the relative range by \eqref{eqrange} as feedback information. The system can also record the real-time measurement to facilitate the experimental performance analysis. The Racecar has a servo motor with a maximum angle of $0.39$ \si{rad} to control its angular speed, the desired value of which is generated by our controller in \eqref{eq222}, while the target is omnidirectional and remotely manually operated through a mobile phone. 
Due to the space limitation, we set the constant linear speed of the Racecar as $v=1$ \si{m/s} and the predefined radius as $r_d=1$ \si{m}.

For the stationary Robomaster located at $\bm p_o=[-0.10, 1.85]'$ as Fig.~\ref{fig30}, the Racecar approaches the desired orbit from the initial position $\bm p(t_0)=[-0.01, -3.33]'$  and then slides on it. See Fig.~\ref{fig31} for an illustration. When the Robomaster is freely moving, both trajectories of the Racecar and Robomaster are exhibited in Fig.~\ref{fig28}(a), wherein the small square and circle respectively represent their initial positions $\bm p(t_0)=[-1.02, 1.60]'$ and $\bm p_o(t_0)=[1.26, -2.66]'$. Moreover, the relative trajectory of the Racecar with respect to the Robomaster is depicted in Fig.~\ref{fig28}(b). Furthermore, the tracking error is shown in Fig.~\ref{fig29}, and two supplementary videos are available at \cite{dong2020PD}. It is clear that the steady-state error has an upper bound and the objective \eqref{eqobj2} is achieved. 
%

\begin{figure}[t!]
	\centering{\includegraphics[width=0.8\linewidth]{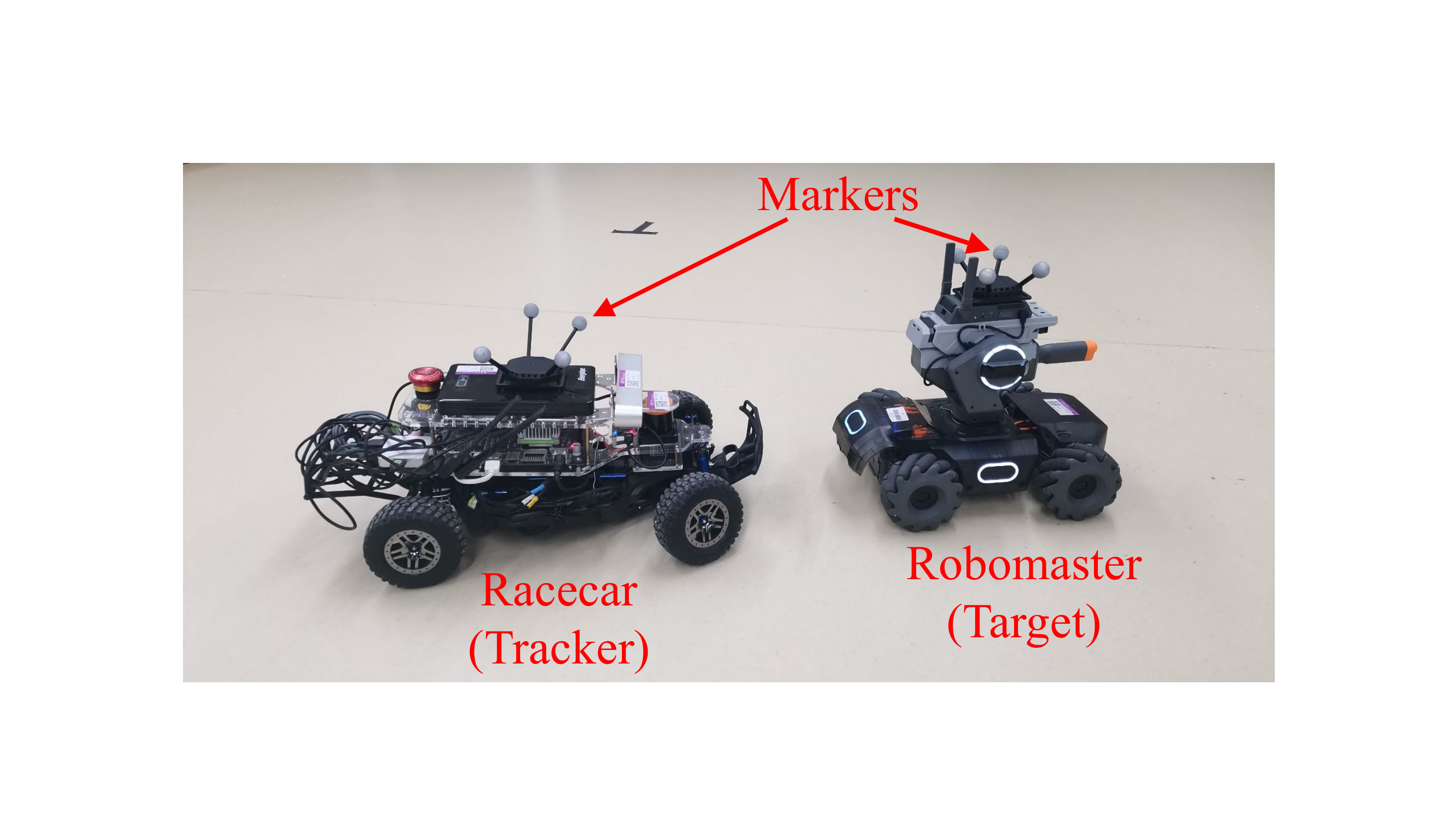}}
	\caption{The Racecar and Robomaster are adopted respectively to play the role of tracker and target.}
	\label{fig27}
\end{figure}

\begin{figure}[t!]
	\centering{\includegraphics[width=0.8\linewidth]{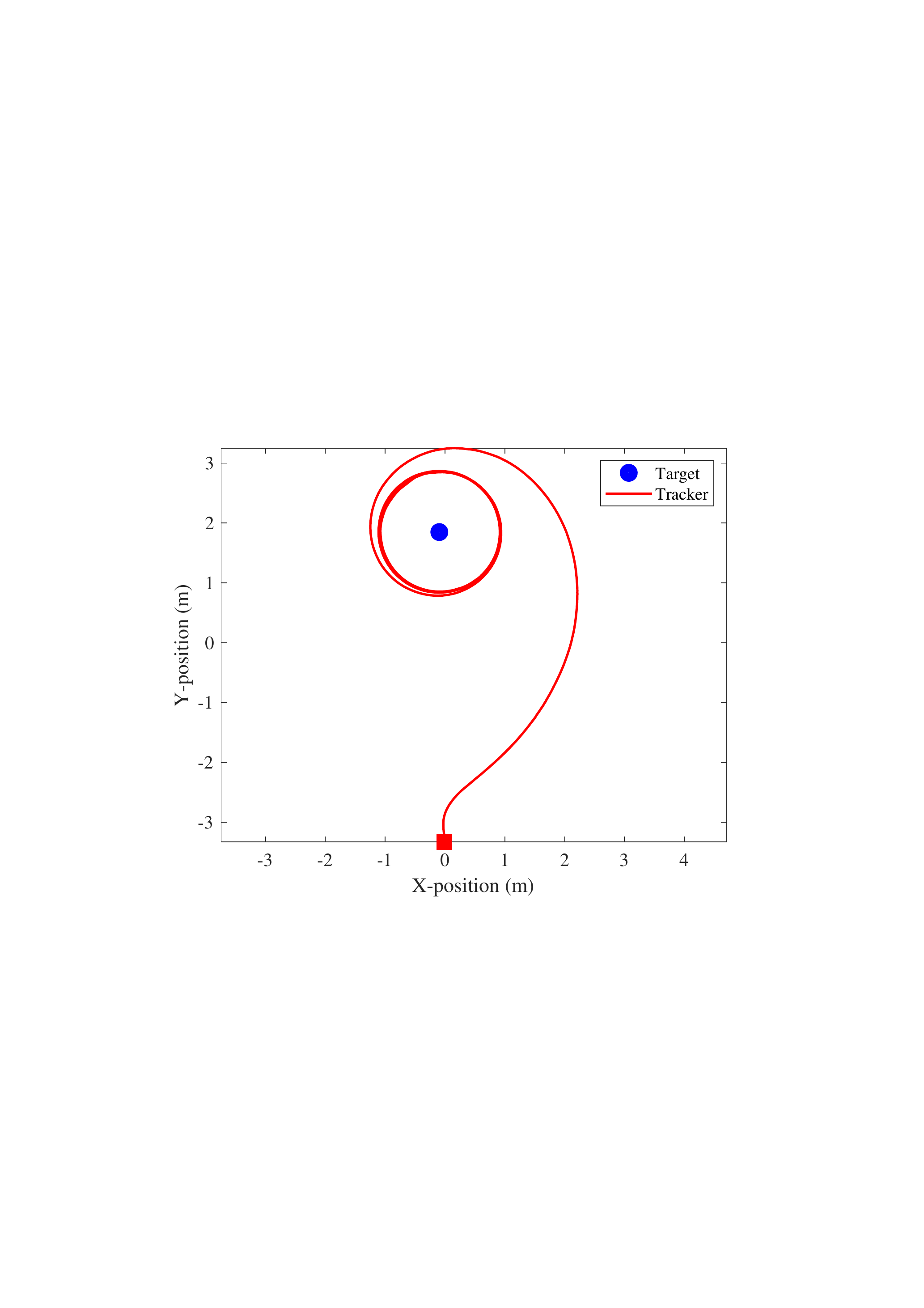}}
	\caption{Trajectory of the Racecar and position of the stationary Robomaster.}
	\label{fig30}
\end{figure}

\begin{figure}[t!]
	\centering{\includegraphics[width=0.8\linewidth]{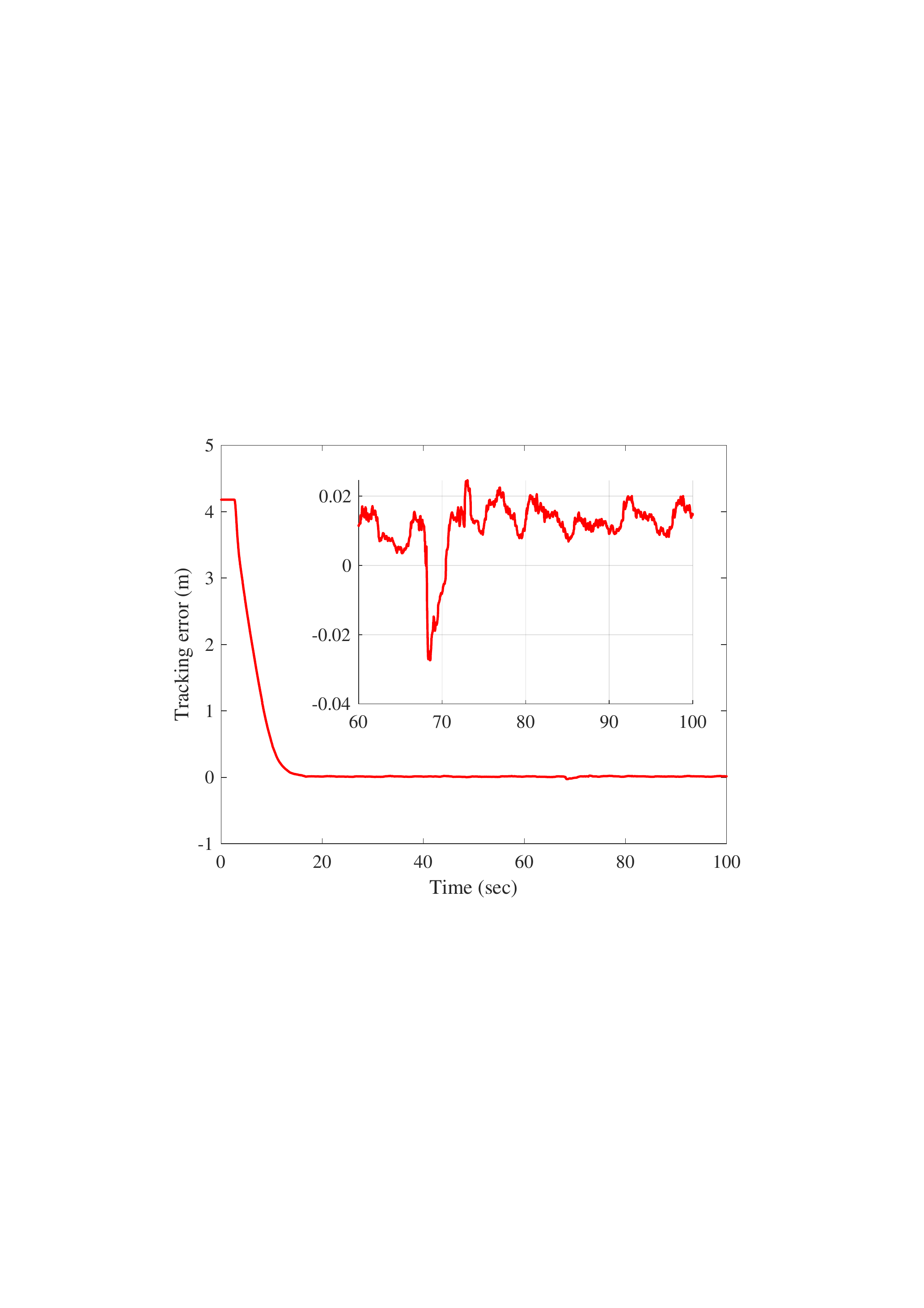}}
	\caption{Tracking error $d(t)-r_d$ of the Racecar.}
	\label{fig31}
\end{figure}

\begin{figure}[t!]
	\centering{\includegraphics[width=0.8\linewidth]{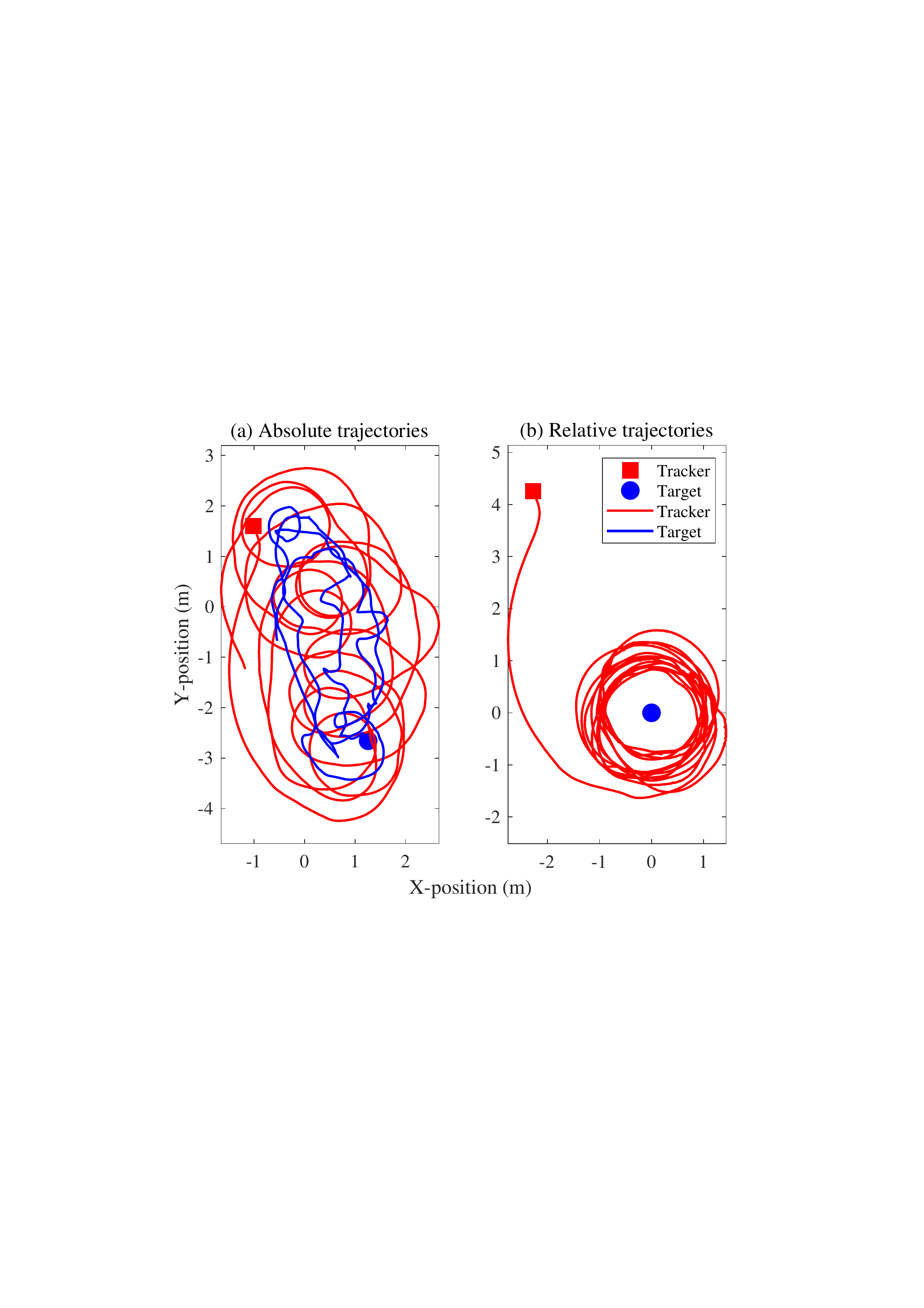}}
	\caption{Trajectories of the Racecar and Robomaster.}
	\label{fig28}
\end{figure}

\begin{figure}[t!]
	\centering{\includegraphics[width=0.8\linewidth]{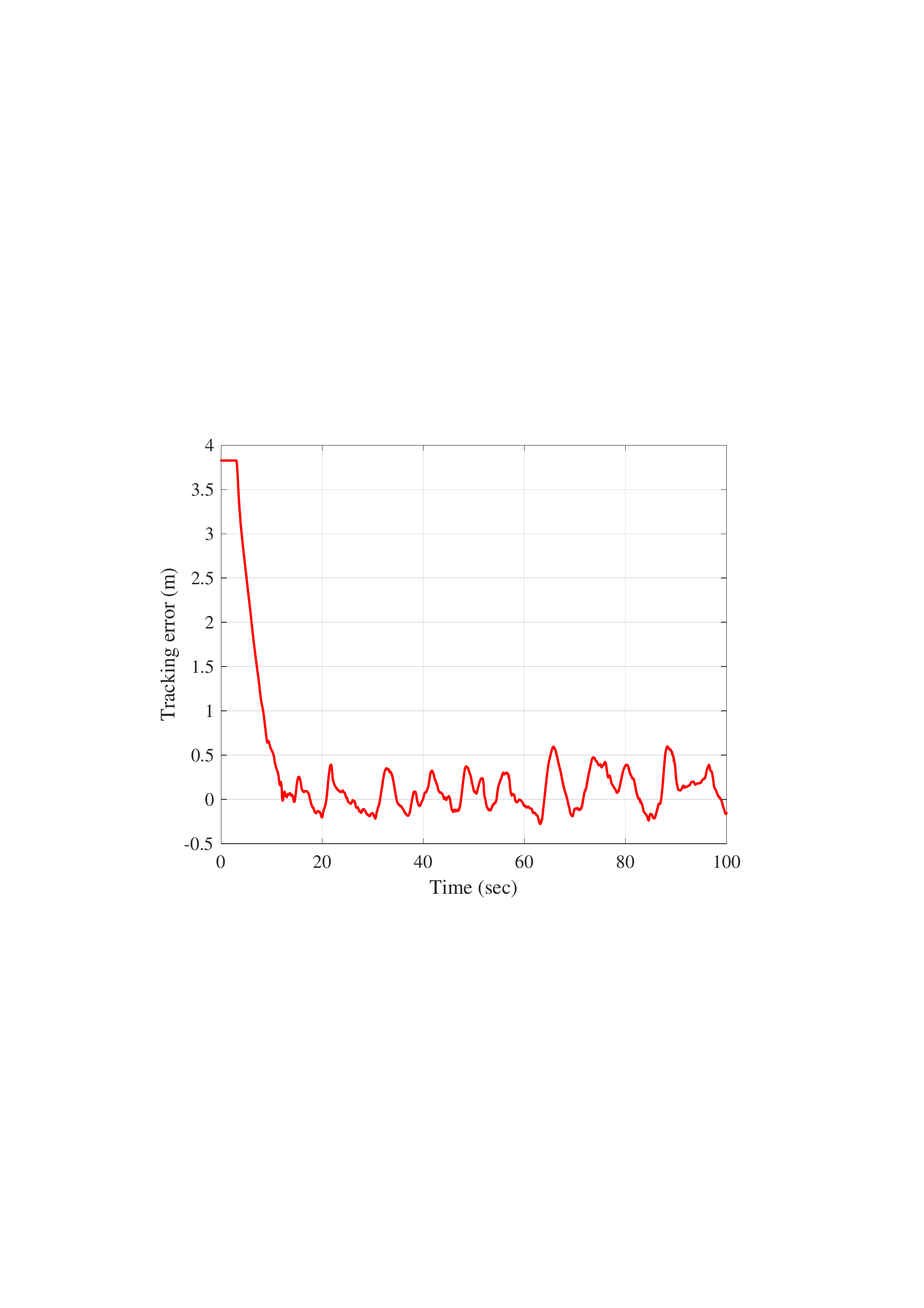}}
	\caption{Tracking error $d(t)-r_d$ of the Racecar.}
	\label{fig29}
\end{figure}

\subsection{Target circumnavigation by a fixed-wing UAV} \label{simu5}
\begin{figure}[t!]
	\centering{\includegraphics[width=1.0\linewidth]{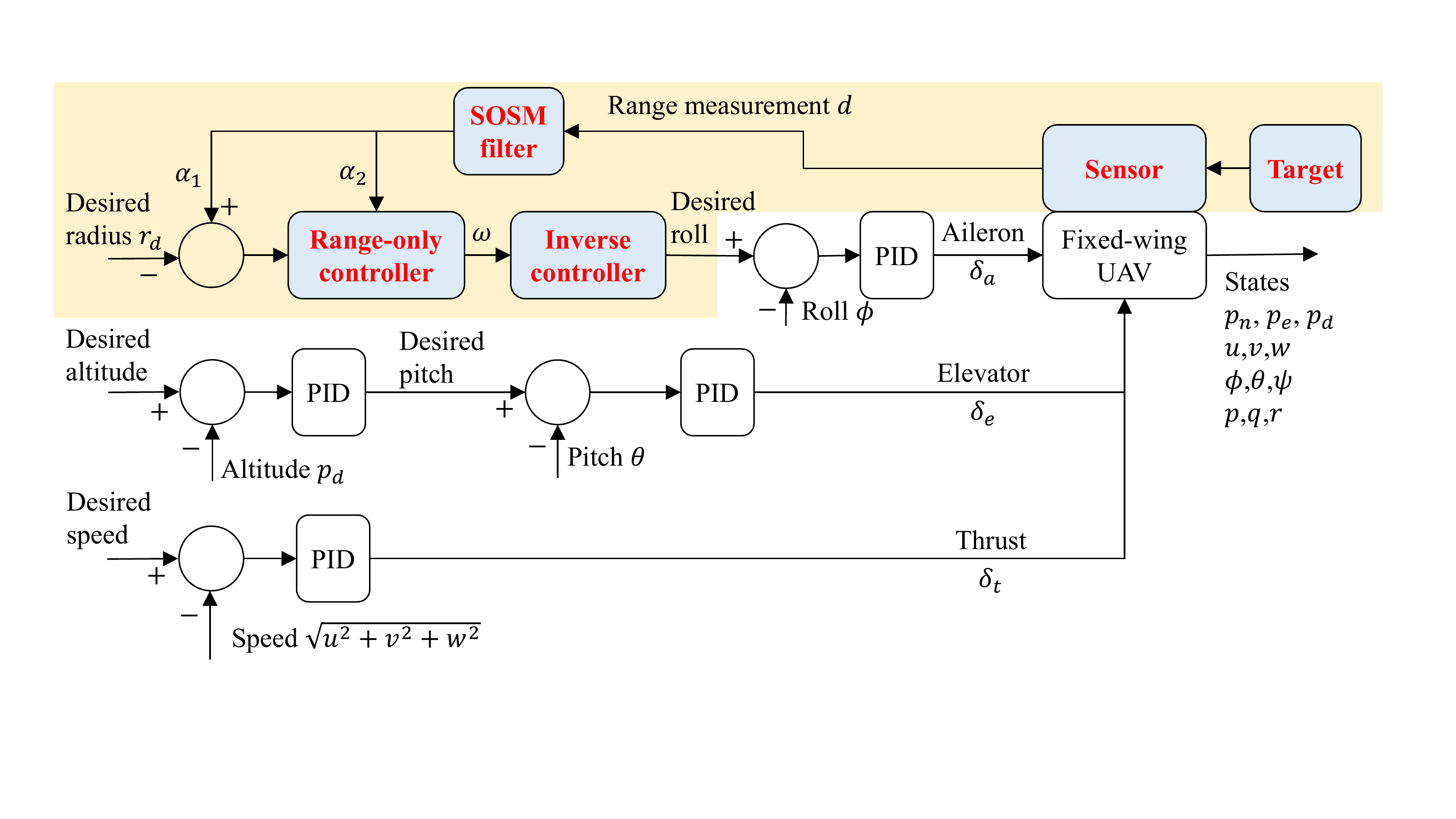}}
	\caption{Control architecture for the fixed-wing UAV.}
	\label{fig21a}
\end{figure}
In this subsection, a 6-DOF fixed-wing UAV \cite{Dong2019Flight,Beard2012Small} is adopted to test the effectiveness of the range-only controller in \eqref{eq2}. Due to the page limitation, we omit details of the complicated mathematical model of the UAV, which can be found in Chapter 3 of  \cite{Beard2012Small}, and directly adopt codes from \cite{small} for the model. The objective of circumnavigation requires the fixed-wing UAV to move at a constant altitude $-100$ \si{m} and forward speed $35$ \si{m/s} by adjusting the aileron deflection, elevator deflection, and propeller thrust. 

To this end, we adopt the control architecture of \cite{small} and modify it to verify our controllers. The details can be found in Fig.~\ref{fig21a} where the differences from \cite{small} are highlighted in the bright yellow shade. The inverse controller is designed to convert the desired angle speed $\omega(t)$ generated by \eqref{eq2} to the desired roll angle $\psi_{\text{des}}(t)$, which is given by
\begin{align*}
	\psi_{\text{des}}(t) = \tan^{-1}\left(  \frac{\omega(t)\sqrt{u^2+v^2+w^2}}{g}  \right) ,
\end{align*}
where $g$ is the acceleration of gravity. All the controllers in Fig.~\ref{fig21a} are encoded with saturation to simulate the physical characteristics of the UAV, whose values are the same as \cite{small}.


Moreover, we consider the situation that range measurements are corrupted by an additive Gaussian noise, i.e.,
\begin{align*}
	d(t) = \Vert \bm p(t) -\bm p_o(t) \Vert _2 + \eta(t),
\end{align*}
where $\eta(t) \sim \mathcal{N}(0,\sigma^2)$. 

The target in \eqref{eqtar} moves on the plane and its horizontal acceleration is generated by a uniform distribution, e.g. $\bm a_o(t) \sim \mathcal U[-1.0,1.0]$. The projected trajectories of both the target and UAV are given in Fig.~\ref{fig22}(a) with initial velocities $\bm v_o(t_0)=[4.0, 3.0]'$ \si{m/s} and $\bm v(t_0)=[35,0]'$ \si{m/s}, where the circle and square denote their initial positions $\bm p_o(t_0)=[0,100]'$ and $\bm p(t_0)=[0,0]'$, respectively. In addition, Fig.~\ref{fig22}(b) illustrates the relative trajectory of the UAV with respect to the target. Furthermore, both the actual range (rate) and its estimated version versus time are depicted by Fig.~\ref{fig24} with $r_d =400$ and $\sigma = 4$. One can observe from the dashed line in Fig.~\ref{fig24} that the maximum circumnavigation error is not larger than $6$ \si{m}, which implies that the performance of the proposed controller is not significantly degraded.


\begin{figure}[t!]
	\centering{\includegraphics[width=0.8\linewidth]{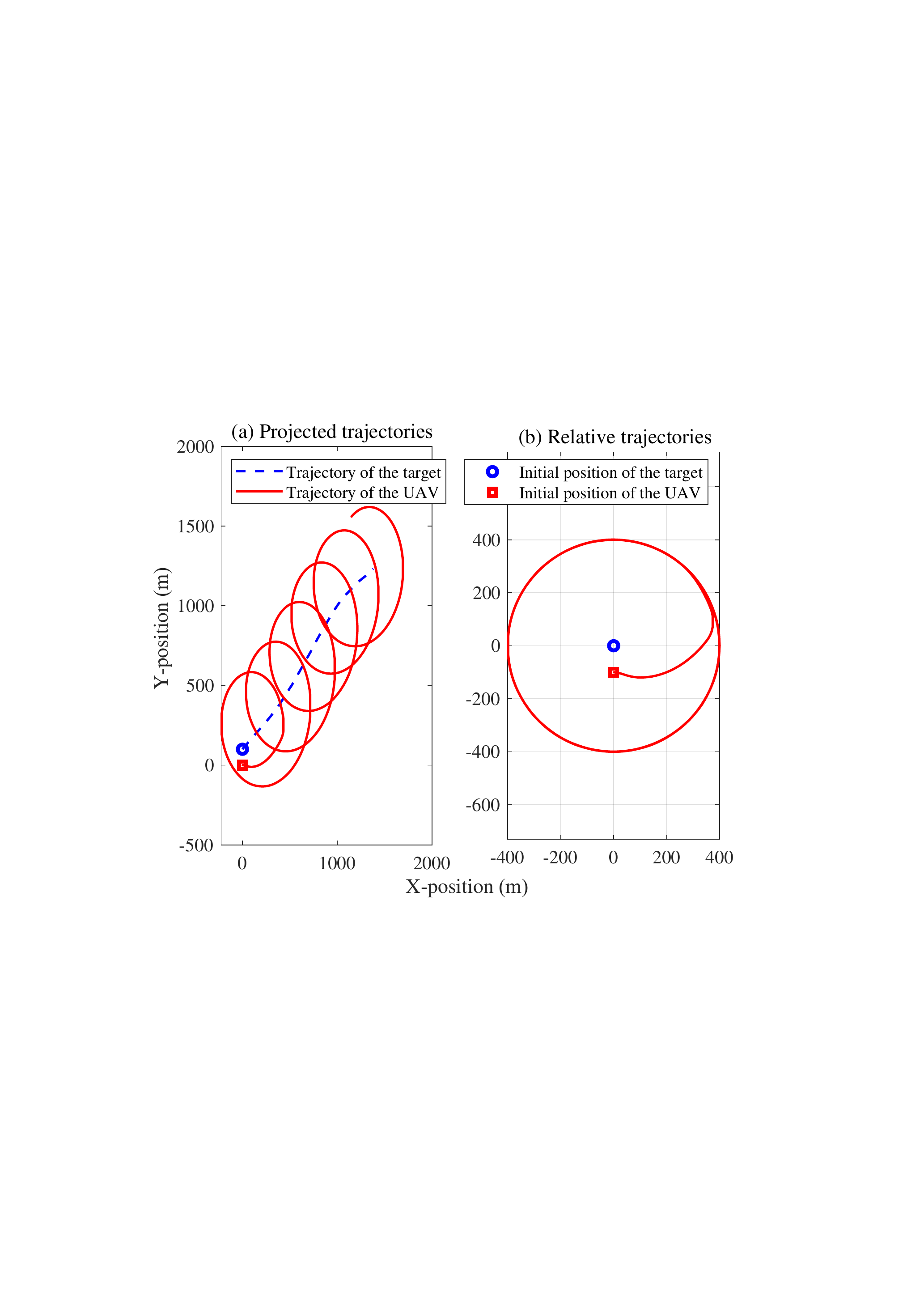}}
	\caption{Trajectories of the target and fixed-wing UAV.}
	\label{fig22}
\end{figure}
\begin{figure}[t!]
	\centering{\includegraphics[width=0.8\linewidth]{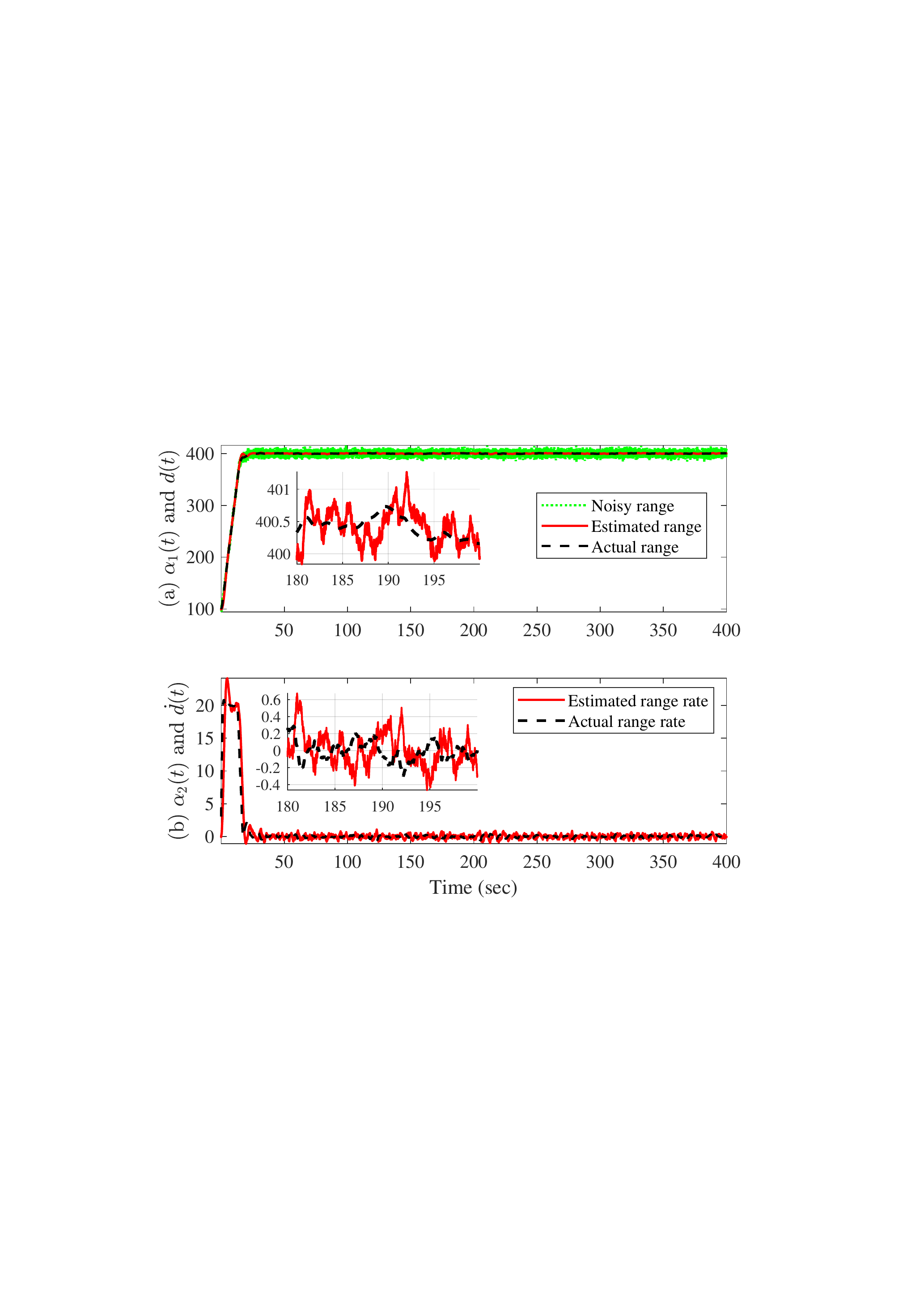}}
	\caption{Range $d(t)$ and its rate $\dot d(t)$ versus time with measurement noises.}
	\label{fig24}
\end{figure}

\subsection{Comparison with the existing methods} \label{simu3}

For comparison, we consider the constraint on control input and let $|\omega(t)| \le \bar \omega$, where $\bar \omega = 1$ \si{rad/s}  \cite{Matveev2011Range} in this subsection. The compared methods include the geometrical approach \cite{Cao2015UAV} with parameters $k=1$ and $r_a= \sqrt{3}$, the bearing approach \cite{zhang2020range} with parameter $k = 1.4/r_d$, the sliding mode approach \cite{Matveev2011Range} with $\delta=0.83$ and $\gamma = 0.3$, and the backstepping approach \cite{dong2019Target} with $k_1 =20$ and $k_2 =0.3$. 

When the target is stationary, the performance comparison is depicted in Fig.~\ref{fig14}, wherein $\bm s(t_0)= [3,3,0.25\pi]'$, i.e., $\phi(t_0)=0$ as shown by Fig.~\ref{fig14a}. It is observed from Fig.~\ref{fig14} that all methods other than the sliding mode approach can complete the task with zero steady-state error. In addition, the geometrical approach has large overshoots, and the convergence rate of the bearing approach is slowest. From Fig~\ref{fig15}, we observe that the input of the sliding mode approach switches between $-\bar \omega$ and $\bar \omega$, which results in the ``chattering" phenomenon as illustrated in Figs.~\ref{fig14} and \ref{fig14a}.

\begin{table}[!t]
	\caption{Parameters of the PD-like range-based controller \eqref{eq222}}
	\centering	
	\begin{tabular}{|c|c|c|}	
		\hline
		{Parameter}   &{$c_1$} &{$c_2$} \\
		\hline
		{Value}         & 200          & 30               \\       
		\hline
	\end{tabular}%
	\label{tab2}%
\end{table}%

Then, let the target be moving with $\bar v_o=0.15$ \si{m/s} and the acceleration $\bm a_o(t)$ in \eqref{eqtar} be generated by a sine distribution, e.g. 
$
\bm a_o(t) = [0.01\sin(0.01t), 0.01\cos(0.01t)]'.
$
 The control parameters are selected as those in Table \ref{tab2}. Fig.~\ref{fig12} illustrates the results with $\bm s_o(t_0)=[0,0,-\sqrt{2}\bar v_o/2,-\sqrt{2}\bar v_o/2]'$ and $\bm s(t_0)= [5,0,-0.6\pi]'$. Since both the geometrical approach and the bearing approach are designed for the stationary target, they cannot handle the case of a moving target. The performance of our controller is similar to that of the sliding mode approach. However, the mean-square steady-state circumnavigation error (MSSE)\footnote{$\text{MSSE} = \frac{1}{n} \sum_{i=1}^{n} (d(i)-r_d)^2$ where $i$ denotes the $i$-th time step.} of our PD-like controller is less than that of the sliding mode approach in the time interval from $160$ \si{s} to $200$ \si{s}. See the partially enlarged view of Fig.~\ref{fig12}. Moreover, Fig.~\ref{figinput} confirms that the controller limit only takes effect at the early stage of the system.


Overall, the range-based controller in (\ref{eq222}) outperforms the methods in  \cite{Cao2015UAV,zhang2020range,Matveev2011Range}. Particularly, our method is effective in handling the problem of moving target circumnavigation.

\begin{figure}[t!]
	\centering{\includegraphics[width=0.8\linewidth]{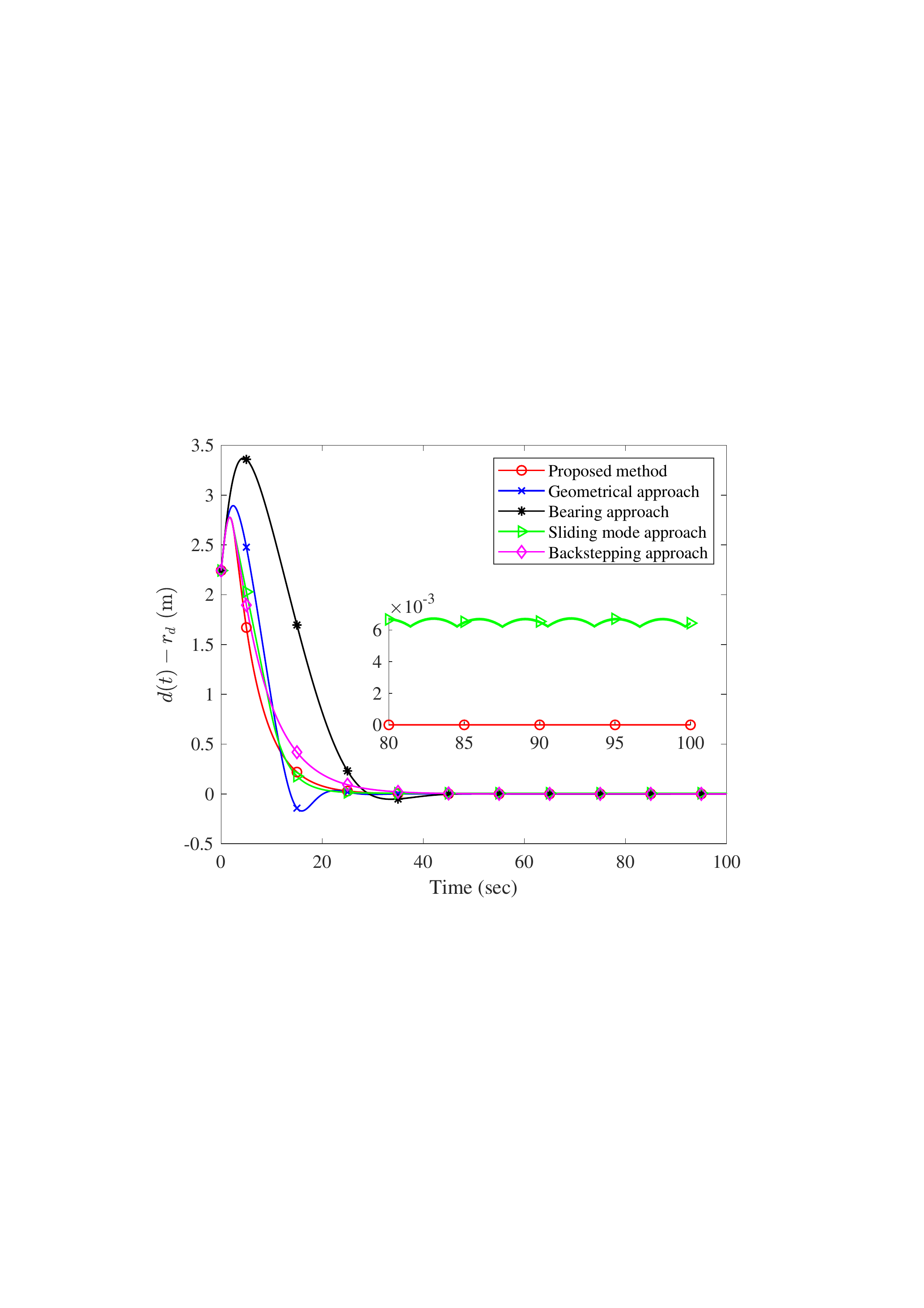}}
	\caption{Tracking errors $d(t)-r_d$ when the target is stationary.}
	\label{fig14}
\end{figure}
\begin{figure}[t!]
	\centering{\includegraphics[width=0.8\linewidth]{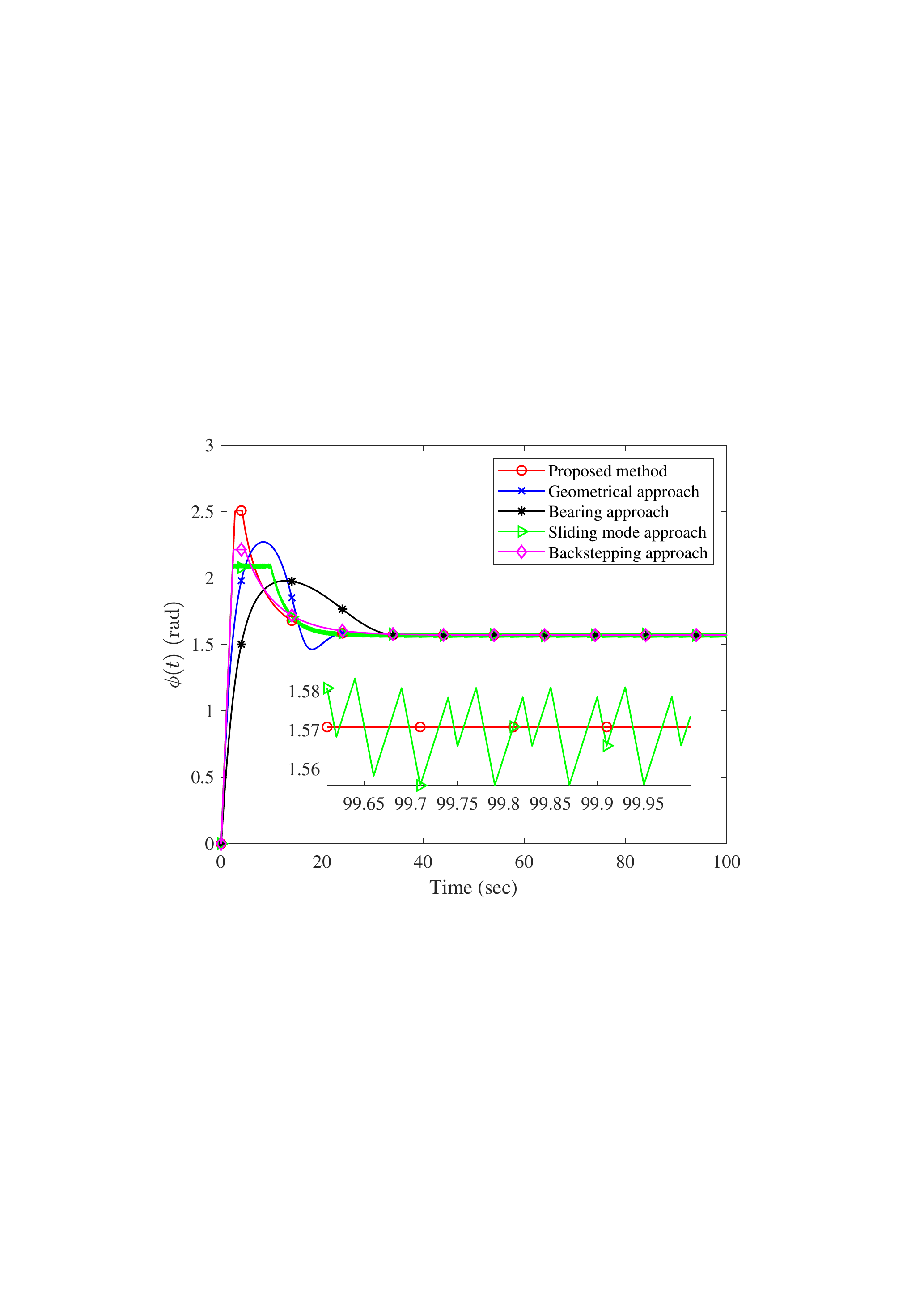}}
	\caption{Variations of $\phi(t)$ when the target is stationary.}
	\label{fig14a}
\end{figure}

\begin{figure}[t!]
	\centering{\includegraphics[width=0.8\linewidth]{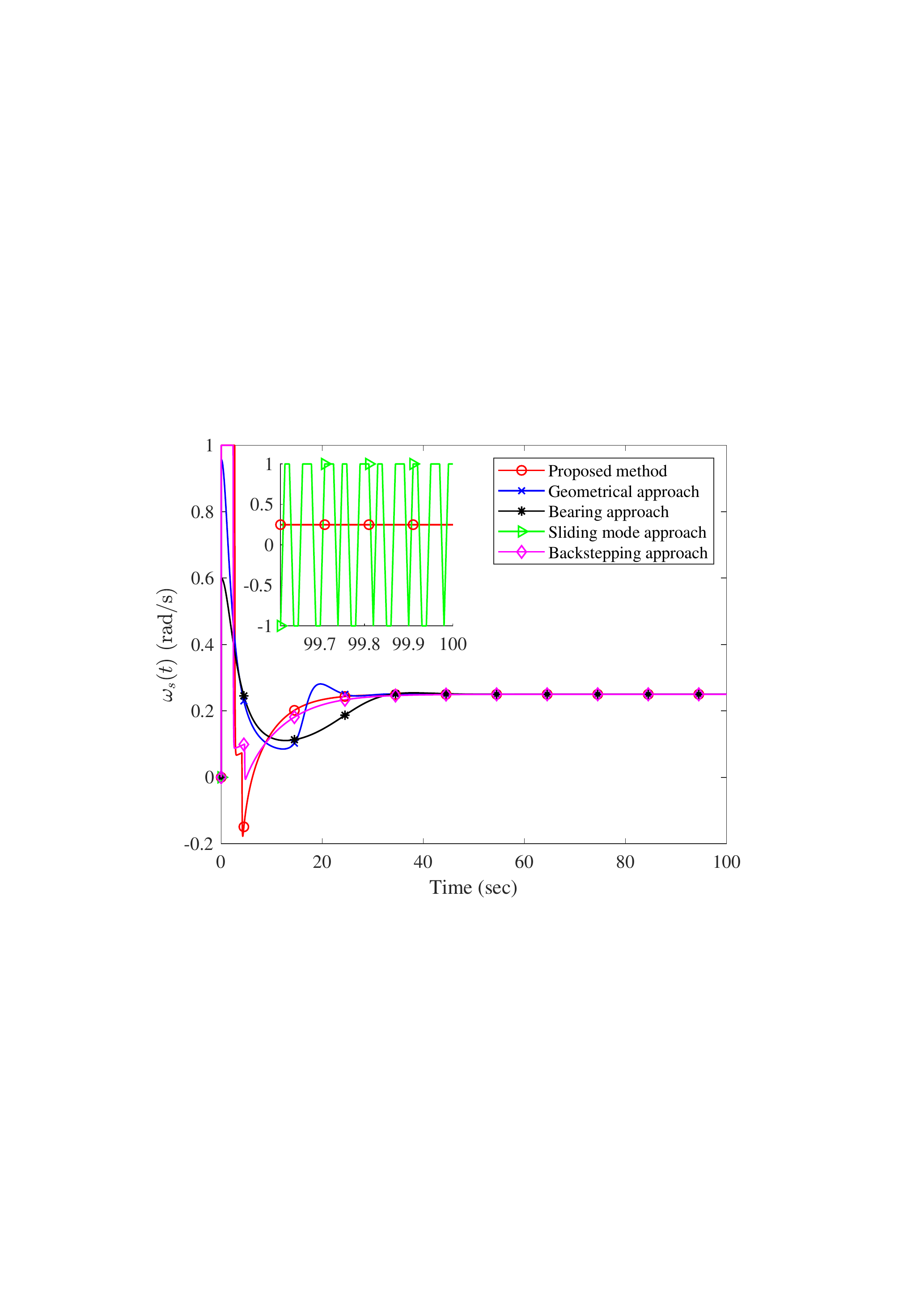}}
	\caption{Control inputs $\omega_s(t)$ when the target is stationary.}
	\label{fig15}
\end{figure}

\begin{figure}[t!]
	\centering{\includegraphics[width=0.8\linewidth]{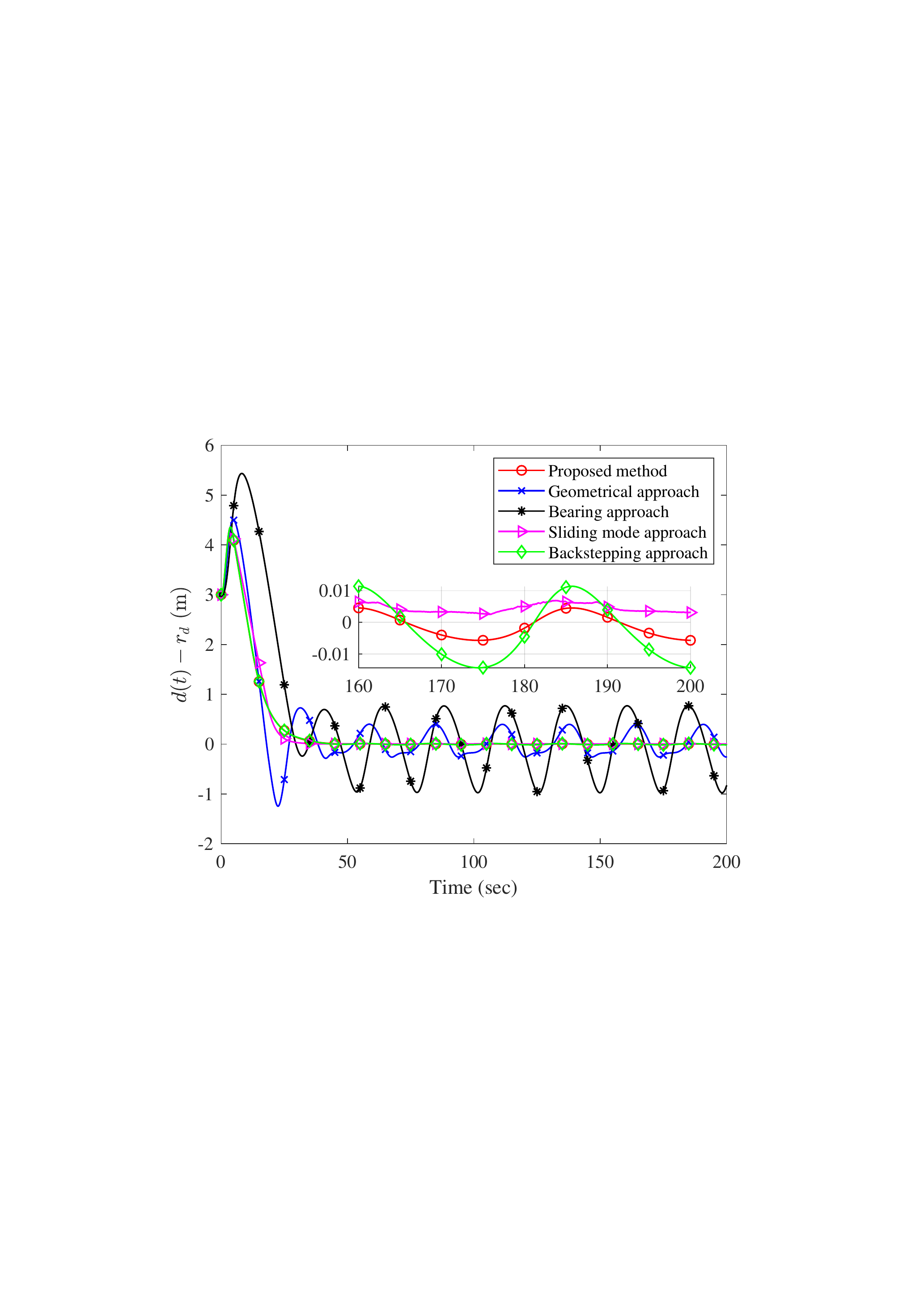}}
	\caption{Tracking errors $d(t)-r_d$ when the target is moving.}
	\label{fig12}
\end{figure}

\begin{figure}[t!]
	\centering{\includegraphics[width=0.8\linewidth]{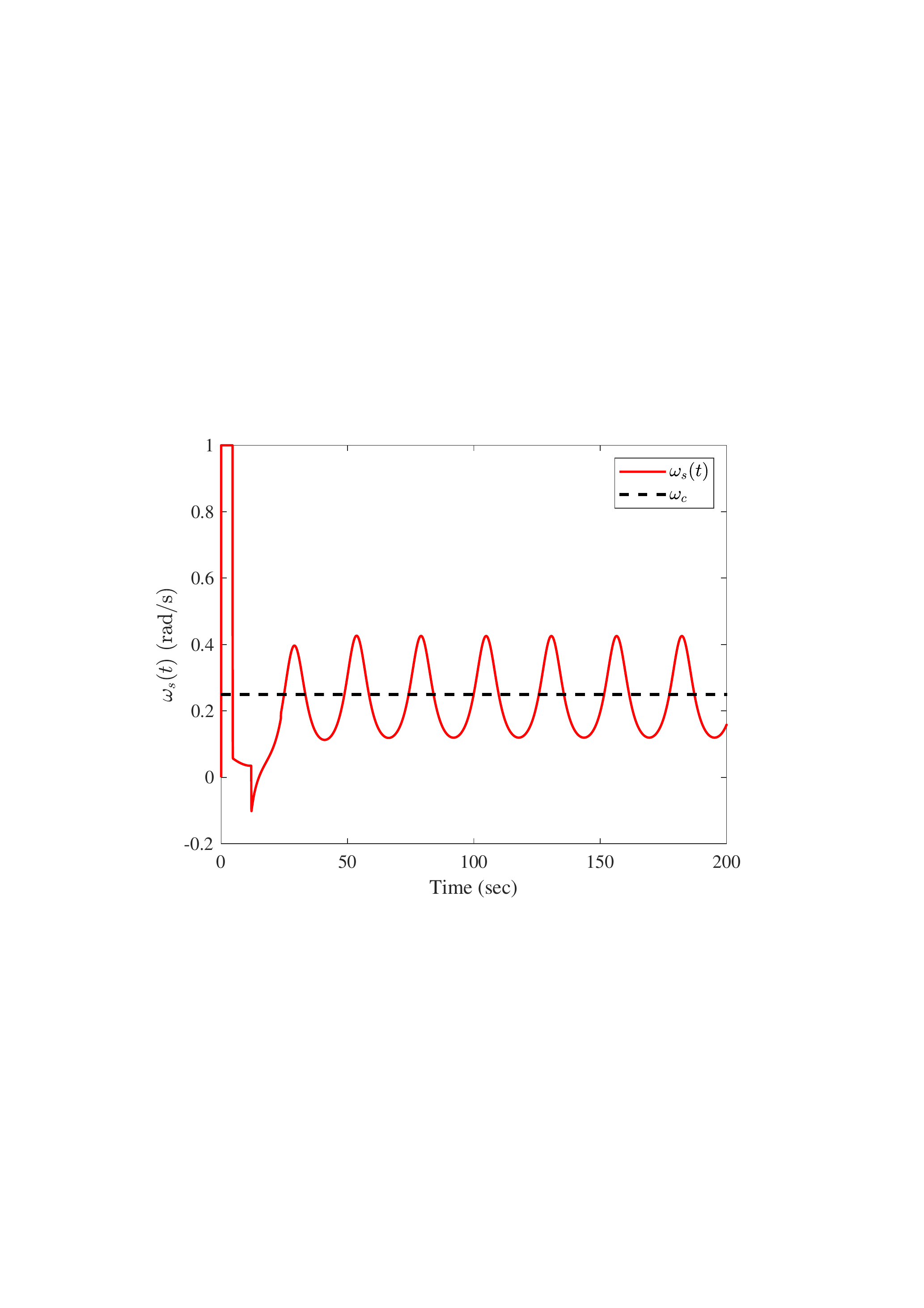}}
	\caption{Control input $\omega_s(t)$ with the limit $\bar \omega=1$ \si{rad/s}.}
	\label{figinput}
\end{figure}

\section{Conclusion} \label{sec6}

In this paper, we have proposed a range-only controller to drive a nonholonomic vehicle to circumnavigate a moving vehicle with double-integrator kinematics which plays the role of a target. Given that both the range and range rate measurements are known, the proposed controller has a Proportional Derivative (PD)-like form with a bias to eliminate steady-state circumnavigation error. Thus, for a stationary target, the controller can ensure global convergence and local exponential stability near the equilibrium with zero steady-state error. Moreover, we explicitly showed that the upper bound of the circumnavigation error is proportional to the maximum linear speed and acceleration of the target. Furthermore, we revised the range-based controller by replacing the actual range rate with its estimated version by designing a second-order-sliding-mode (SOSM) filter.  Finally, the numerical simulations and real experiments validated our theoretical results, and showed that our method outperforms the existing controllers and is particularly effective in the case of maneuvering target.



%

%

%

%
%
%

\appendix
\subsection{Proof of Proposition 1} \label{suba1}
To prove Proposition \ref{prop2}, we first show that there exists a finite time instant $t_1\ge t_0$ such that $\phi(t) \in [0,\pi], \forall t\ge t_1$, for any initial state, see Lemma \ref{lemma1}. Then,  the closed-loop system in (\ref{eq5}) under (\ref{eq222}) is shown to be asymptotically stable with an exponential convergence rate. 

\begin{lemma} \label{lemma1}
	Under the conditions in Proposition \ref{prop2}, 
	there exists a finite time instant $t_1\ge t_0$ such that $\phi(t) \in [0,\pi]$, $\forall t \ge t_1$, for any initial state $\phi(t_0) \in (-\pi,\pi]$.
\end{lemma}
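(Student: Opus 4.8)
The plan is to pass to the polar representation \eqref{eq5} and analyse only the angle $\phi(t)$. Substituting the controller \eqref{eq222} and using $\omega_c=v/r_d$ together with $\dot e(t)=\dot d(t)/r_d=\omega_c\cos\phi(t)$, the $\phi$-equation becomes
\[
\dot\phi(t)=\omega_c\bigl(1+c_1\cos\phi(t)\bigr)+c_2\,\sat\!\bigl(e(t)\bigr)-\frac{v}{d(t)}\sin\phi(t),
\]
which enters $d$ only through the factor $v/d$ and through $e=(d-r_d)/r_d$. The relevant set is $[0,\pi]$, i.e. $\{\sin\phi\ge 0\}$, which contains the desired equilibrium $\phi=\pi/2$. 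I would prove the claim in two parts: (i) $[0,\pi]$ is positively invariant for $\phi$; (ii) if $\phi(t_0)\in(-\pi,0)$, then $\phi$ reaches $[0,\pi]$ in finite time. The lemma then follows, with $t_1=t_0$ when $\phi(t_0)\in[0,\pi]$ and $t_1$ the entry time otherwise.

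For part (i), $\phi$ lives on a circle, so it can leave $[0,\pi]$ only by crossing $\phi=0$ downwards or $\phi=\pi$ (wrapping to $-\pi^{+}$). At $\phi=0$ we have $\sin\phi=0,\ \cos\phi=1$, hence $\dot\phi=(1+c_1)\omega_c+c_2\sat(e)\ge(1+c_1)\omega_c-c_2>2\omega_c>0$, using $\sat(e)\ge-1$ and \eqref{eqc}; at $\phi=\pi$ we have $\dot\phi=(1-c_1)\omega_c+c_2\sat(e)\le(1-c_1)\omega_c+c_2=-\bigl((c_1-1)\omega_c-c_2\bigr)<0$, using $\sat(e)\le 1$ and \eqref{eqc}. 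Thus the vector field points strictly into $[0,\pi]$ at both boundary points, which yields invariance; the degenerate event $d(t_*)=0$ is handled by Definition \ref{defi}, where $\phi$ jumps from $\pi$ to $0$ and hence stays in $[0,\pi]$.

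For part (ii), on $(-\pi,0)$ we have $\sin\phi<0$, so $-\,(v/d)\sin\phi>0$ and therefore $\dot\phi>\omega_c(1+c_1\cos\phi)+c_2\sat(e)$. By continuity of the two bounds used in part (i), there are $\eta,\varepsilon>0$ with $\dot\phi>\varepsilon$ on $(-\eta,0)$ and $\dot\phi<-\varepsilon$ on $(-\pi,-\pi+\eta)$; hence once $\phi$ enters an end-strip it crosses into $[0,\pi]$ within time $\eta/\varepsilon$, and $\phi$ cannot approach $0$ or $-\pi$ from inside $(-\pi,0)$ without crossing. Consequently, if $\phi(t)\in(-\pi,0)$ for all $t\ge t_0$, then $\phi$ is eventually confined to a compact $[a,b]\subset(-\pi,0)$ on which $\sin\phi\le-\mu<0$; a short additional argument (if $d\to0$ the term $-\,(v/d)\sin\phi$ blows up and forces $\dot\phi\to+\infty$, ejecting $\phi$ from $[a,b]$; for large $d$ the $\phi$-dynamics reduce to $\dot\phi\approx\omega_c(1+c_1\cos\phi)+c_2$, whose sign structure cannot keep $\phi$ in $(-\pi,0)$) confines $(\phi,d)$ to a compact region. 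There $\dot d=v\cos\phi$ has a definite sign whenever $\phi\ne-\pi/2$, so on any interval where $\phi$ stays on one side of $-\pi/2$ the bounded quantity $d$ is monotone; hence $\dot d\to0$, i.e. $\cos\phi\to0$, so the $\omega$-limit set of the trajectory must lie on the line $\phi=-\pi/2$.

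The main obstacle is exactly this last scenario: ruling out that the trajectory gets ``stuck'' near $\phi=-\pi/2$, where the saturated feedback $c_2\sat(e)$ can momentarily cancel the rotational bias $\omega_c+v/d$ and create a spurious equilibrium $(\phi,d)=(-\pi/2,d^{*})$. I would close this with a planar phase-plane argument: along $\phi=-\pi/2$ one has $\dot d=0$, and the $2\times 2$ Jacobian of the $(d,\phi)$-system there has positive trace $\partial\dot\phi/\partial\phi=c_1\omega_c>0$, so no such equilibrium is attracting; combining this with a Poincaré--Bendixson/index argument that excludes periodic and recurrent behaviour confined to the lower half-circle forces $\phi$ to cross $-\pi/2$, after which $\omega_c(1+c_1\cos\phi)+c_2>0$ on the corresponding sub-interval sweeps $\phi$ up to $0$ and into $[0,\pi]$ in finite time. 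Every other step is elementary sign-checking plus the boundary-layer estimate above; establishing the transversality near $\phi=-\pi/2$, and tracking how it depends on $c_1$ and $c_2$, is where the argument needs real care.
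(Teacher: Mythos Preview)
Your Part~(i) is correct and essentially identical to the paper's argument: the boundary checks at $\phi=0$ and $\phi=\pi$ using \eqref{eqc}, together with Definition~\ref{defi} at $d=0$, give positive invariance of $[0,\pi]$.

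Part~(ii) has the right shape (locate possible equilibria on $\phi=-\pi/2$, show they are not attracting, invoke Poincar\'e--Bendixson), but there is a genuine gap. An ``index argument'' does \emph{not} exclude periodic orbits here: the lower half-strip does contain equilibria on $\phi=-\pi/2$ (when the discriminant in the paper's \eqref{eqa1} is nonnegative), so a closed orbit encircling one of them is not ruled out by index considerations alone. You need a Bendixson--Dulac argument, and you have not supplied one. The paper fills exactly this gap: with Dulac function $g(d,\phi)=d$ one computes
\[
\frac{\partial (g\,\dot d)}{\partial d}+\frac{\partial(g\,\dot\phi)}{\partial\phi}
= -c_1\omega_c\, d\,\sin\phi>0 \quad\text{on } (0,r_d)\times(-\pi,0),
\]
which eliminates closed orbits (and polycycles) in the critical region. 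Without this, your Poincar\'e--Bendixson step does not go through, and your preceding claim that the $\omega$-limit set ``must lie on the line $\phi=-\pi/2$'' is unjustified (if $\phi$ oscillates across $-\pi/2$, monotonicity of $d$ fails, so you cannot conclude $\dot d\to 0$ the way you do).

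Your compactness discussion is also too loose to stand as written. The paper sidesteps this by a clean case split on $(d,\phi)$ rather than a global $\omega$-limit argument: for $d\ge r_d$ and $\phi\in[-\pi/2,0)$ one has $\sat(e)\ge 0$, $\cos\phi\ge 0$, $\sin\phi<0$, hence directly $\dot\phi>\omega_c$ and $\dot d\ge 0$, so $\phi$ reaches $0$ in finite time while staying in $d\ge r_d$. The region $d\ge r_d$, $\phi\in(-\pi,-\pi/2)$ is then reduced, via a boundary-strip estimate like yours, either to the previous case or to $d<r_d$; and the bounded box $(0,r_d)\times(-\pi,0)$ is where Dulac plus the instability of the equilibria (positive trace $c_1\omega_c$, as you noted) finishes the job. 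Organising Part~(ii) along these lines, and inserting the Dulac computation above, would close your argument.
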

\begin{proof}
	Inserting (\ref{eq222}) into (\ref{eq5}) yields that
	\begin{equation}\label{eq36}
	\begin{split}
	\dot d(t) &= v \cos \phi(t),\\
	\dot \phi(t) &= \omega_c + \frac{c_1 }{r_d}  \dot d(t) + c_2 \sat \left( \frac{d(t)-r_d}{r_d}\right)  - \frac{v \sin \phi(t)}{d(t)}. 
	\end{split}
	\end{equation}	
	
	If $d(t_0)=0$, the vehicle will immediately leave the target due to the constant linear speed $v$, i.e., $d(t_0^+)>0$. By Definition \ref{defi},  $\phi(t_0^+)=0$, which jointly with \eqref{eq36} implies that 
	\begin{align*}
		\dot \phi(t_0^+) > (c_1+1)\omega_c - c_2>0. 
	\end{align*}
	This results in that $\phi(t)$ enters the interval $(0,\pi/2)$ and $d(t)$ increases after $t_0$. Henceforth, we only consider the case of $d(t)>0$.
	
	Moreover, if $\phi(t) = 0$, it follows from (\ref{eq36}) that  
	\begin{align}  \label{eq35}
		\dot \phi(t)
		> (1 +  c_1) \omega_c-c_2 > 0.
	\end{align}
	Similarly, $\phi(t) = \pi$ leads to that
	\begin{align} \label{eq33}
		\dot \phi(t) \le  (1 -  c_1) \omega_c+ c_2 < 0.
	\end{align}
	
	Together with the fact that $\dot \phi(t)$ is continuous with respect to $t$ when $d(t)>0$, it implies that $\phi(t)\in [0,\pi]$ for all $t\ge t_0$ if $\phi(t_0)\in [0,\pi]$. 
	
	Next, we only need to show that there exists a finite time instant $t_1 > t_0$ such that $\phi(t_1) \in [0,\pi]$ if $\phi(t_0)\in (-\pi,0)$. To this end, three cases are considered.
	\begin{itemize}
		\item [(a)] $d(t_0) \in [r_d,\infty)$ and $\phi(t_0) \in [-\pi/2,0)$.
		\item [(b)] $d(t_0) \in [r_d,\infty)$ and $\phi(t_0) \in (-\pi,-\pi/2)$.
		\item [(c)] $d(t_0) \in (0, r_d)$ and $\phi(t_0) \in (-\pi,0)$.
	\end{itemize}
		
	
	{\bf Case (a)}: if $d(t) \in [r_d,\infty)$ and $\phi(t) \in [-\pi/2,0)$, it follows from (\ref{eq36}) that $\dot d(t) \ge 0$ and $\dot \phi(t) >\omega_c$. That is, $d(t)$ is increasing and $\phi(t)$ is strictly increasing with a rate greater than $\omega_c$. Then, Case (a) cannot always hold and must be violated in the sense that there exists a finite time instant $t_1>t_0$ such that  $\phi(t_1)\ge 0$.

	
	\begin{figure}[t!]
		\centering{\includegraphics[width=0.8\linewidth]{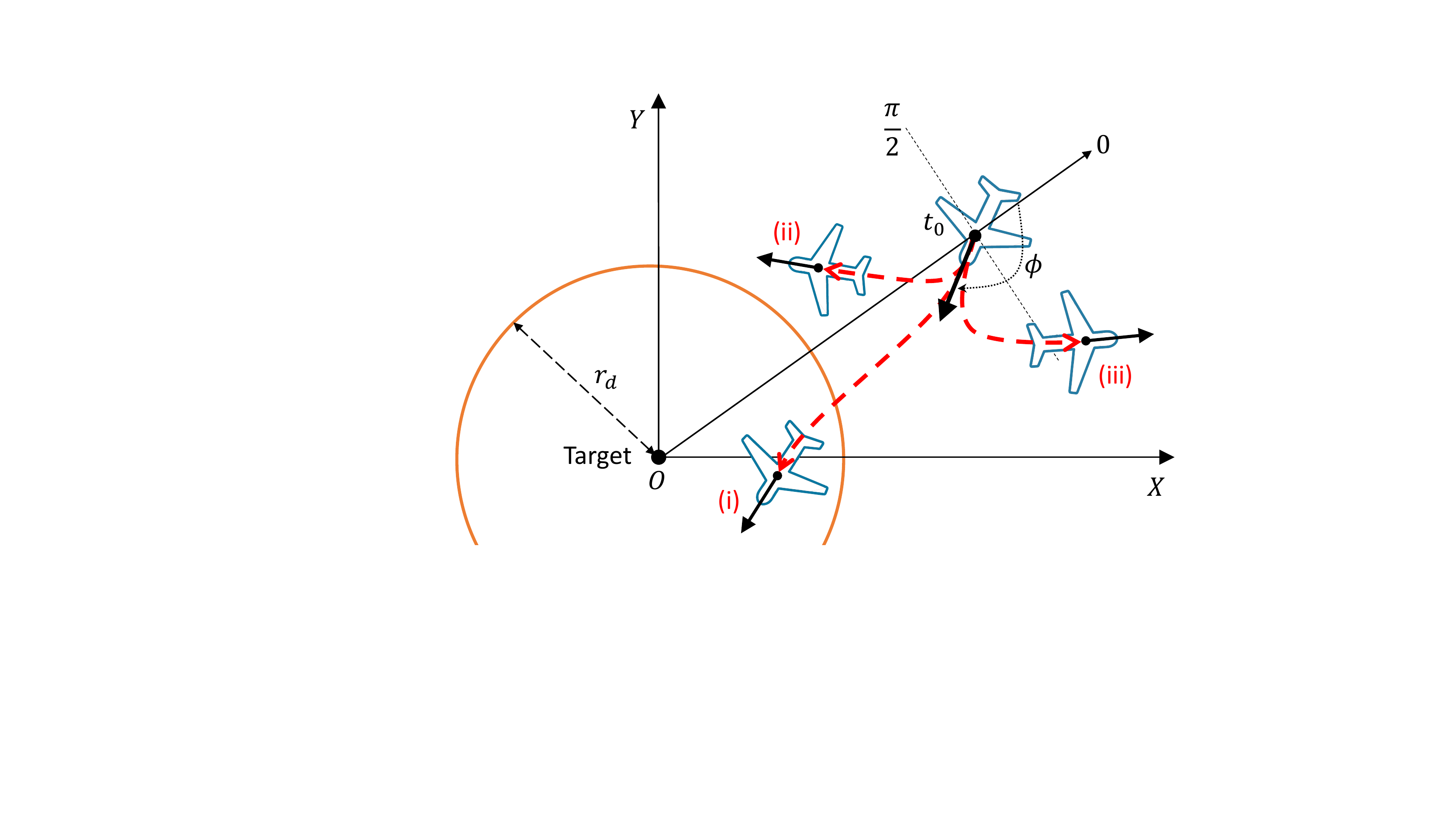}}
		\caption{Illustration  of Case (b), i.e., $d(t_0) \in [r_d,\infty)$ and $\phi(t_0) \in (-\pi,-\pi/2)$.}
		\label{fig16}
	\end{figure}	
	{\bf Case (b)}: by (\ref{eq36}), (\ref{eq33}) and $d(t)\ge r_d$, it holds that 
	\begin{eqnarray}
			&& \dot \phi(t)  > {\omega_c}, ~\text{if}~ \phi(t) = -\pi/2, \label{casenegative} \\
			&&\dot \phi(t)  < {0},~~~\text{if}~\phi(t) = -\pi.\label{caseneg}
	\end{eqnarray}
	We use Fig.~\ref{fig16} to help illustrate this case. By \eqref{casenegative} and the continuity of $\dot \phi(t)$, there is a sufficiently small $\varepsilon_0>0$ such that $\dot \phi(t) > 0$ if $|\phi(t) +\pi/2|\le\varepsilon$ for any $\varepsilon\in(0,\varepsilon_0)$. 
	
	If $\phi(t)\in(-\pi,-\pi/2-\varepsilon]$, it follows from \eqref{eq36} that $\dot d(t)\le -v\sin\varepsilon<0$. Thus, either there exists some finite $\delta>0$ such that $d(t_0+\delta)< r_d$, which corresponds to {\bf Case (c)}, or $\phi(t)$ will escape from the interval $(-\pi,-\pi/2-\varepsilon]$. In the later case, there exists a finite $\delta>0$ such that $\phi(t_0+\delta) \ge -\pi/2-\varepsilon $ or $\phi(t_0+\delta)\le -\pi$. If $\phi(t_0+\delta) \ge -\pi/2-\varepsilon$,  letting $\varepsilon$ go to zero reduces to either {\bf Case (a)} or {\bf Case (c)}, depending on the value of $d(t_0+\delta)$. If $\phi(t_0+\delta) \le -\pi$, it means that $\phi(t)$ must have already entered the interval $[0, \pi]$.

	{\bf Case (c)}: we first show that there is no stable equilibrium of \eqref{eq36}. Letting $\phi(t) = -\pi/2$, it follows from (\ref{eq36}) and $d(t)\in(0,r_d)$ that 
	\begin{align}\label{eqa1}
		\dot \phi(t) 
		= \frac{c_2 }{  r_d d(t) } \left(d^2(t) + \frac{v -c_2  r_d}{c_2 } d(t) + \frac{v r_d }{c_2 }\right).
	\end{align}
	Let $\triangle =({v}/{c_2}-r_d)^2-4vr_d/c_2$. 
		\begin{itemize}
		\item If $\triangle<0$, it follows from \eqref{eqa1} that $\dot \phi(t) > 0$, i.e., there is no equilibrium in Case (c). 
		\item If $\triangle\ge 0$, letting $\dot \phi(t)=0$ in \eqref{eqa1} yields two solutions with respect to $d(t)$, which are respectively denoted by $d_1$ and $d_2$. Then, $\bm x_{\text{eq}1}=[d_1,-\pi/2]'$ and $\bm x_{\text{eq}2}=[d_2,-\pi/2]'$ are two equilibria of \eqref{eq36}. By linearizing \eqref{eq36} at $\bm x_{\text{eq}i}$, $i\in\{1,2\}$, it follows that
		\begin{align*}
		\dot {\bm x}(t)=\begin{bmatrix}
		0 & v   \\{ c_2}/{r_d} -{ v }/{d_i^2} &  {c_1 \omega_c}
		\end{bmatrix} (\bm x(t) -\bm x_{\text{eq}i}).
		\end{align*} 
		Since $c_1 \omega_c>0$, one can easily verify that the transition matrix of the above linearized system must contain an unstable eigenvalue.  That is, $\bm x_{\text{eq}i}$ cannot be a stable equilibrium. 
	\end{itemize}
	
	We then show that there is no closed orbit in Case (c).  By selecting a continuously differentiable function $g(\bm x)= x_1(t)$, it follows from $x_1(t) \in (0,r_d)$ and $x_2(t) \in (-\pi, 0)$ that
	\begin{align*}
	\frac{\partial(g(\bm x) \dot x_1)}{\partial x_1}+ \frac{\partial(g(\bm x) \dot x_2)}{\partial x_2} =  -c_1 \omega_c x_1(t)\sin x_2(t) >0.
	\end{align*}	
	Our claim follows from the Dulac's Criterion \cite[Chapter 7.2]{Strogatz2018Nonlinear}.  
	
	Since there is no stable equilibrium  or closed orbit of \eqref{eq36} in Case (c), we recall Poincar{\'e}-Bendixson Theorem \cite[Chapter 7.3]{Strogatz2018Nonlinear} to conclude that Case (c) cannot always hold. Particularly,  there is a finite $\delta>0$ such that $d(t_0+\delta)\ge r_d$ with $\phi(t_0+\delta) \in [-\pi/2,0)$, which is {\bf Case (a)}, or $\phi(t_0+\delta)\in [0,\pi]$.

	Combining the above cases,  there exists a finite time instant $t_1$ such that $\phi(t_1)\in [0,\pi]$ for any initial $\phi(t_0) \in (-\pi,0)$. 
\end{proof}

\begin{lemma} \label{lemma2}
	Under the conditions in Proposition \ref{prop2},  the closed-loop system in (\ref{eq5}) is asymptotically stable.  
\end{lemma}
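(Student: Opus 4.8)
The plan is to treat \eqref{eq36} as a planar system and obtain global convergence from the Poincar\'e--Bendixson theorem, reusing the Bendixson--Dulac device already exploited in Case~(c) of Lemma~\ref{lemma1}. By Lemma~\ref{lemma1} there is a finite $t_1\ge t_0$ with $\phi(t)\in[0,\pi]$ for all $t\ge t_1$; relabelling $t_1$ as $t_0$ and invoking the strict inequalities \eqref{eq35} and \eqref{eq33} (both strict by \eqref{eqc}), the angle cannot return to $\{0,\pi\}$ after the initial instant, so the trajectory of \eqref{eq36} enters the open, convex (hence simply connected) region $D=\{(d,\phi):d>0,\ \phi\in(0,\pi)\}$ and stays there. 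A short computation shows that $\bm x_e=[r_d,\pi/2]'$ is the \emph{unique} equilibrium of \eqref{eq36} in $\overline{D}$: $\dot d=0$ forces $\phi=\pi/2$, and then $\dot\phi=0$ reads $\omega_c+c_2\sat((d-r_d)/r_d)=\omega_c r_d/d$, whose only positive root is $d=r_d$ (the competing root is negative, and no root exists for $d\ge 2r_d$). Moreover the Jacobian of \eqref{eq36} at $\bm x_e$ is $\bigl[\begin{smallmatrix}0&-v\\(c_2+\omega_c)/r_d&-c_1\omega_c\end{smallmatrix}\bigr]$, which is Hurwitz since its trace $-c_1\omega_c<0$ and determinant $v(c_2+\omega_c)/r_d>0$; hence $\bm x_e$ is locally (in fact exponentially) stable, in particular Lyapunov stable \cite[Chapter~4]{Khalil2002Nonlinear}.

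Next I would confine the forward orbit to a compact set $K\Subset D$. An upper bound on $d$ follows because, whenever $d\ge\max\{2r_d,v/c_2\}$ and $\phi\in[0,\pi/2]$, one has $\cos\phi\ge0$, $\sat(e)=1$ and $v/d\le c_2$, so $\dot\phi\ge\omega_c>0$ by \eqref{eq36}; thus $\phi$ crosses $\pi/2$ within time $\pi/(2\omega_c)$, after which $\dot d=v\cos\phi\le0$, and $d$ cannot exceed $\max\{d(t_0),2r_d,v/c_2\}+v\pi/(2\omega_c)$ (alternatively this follows from the $V_z=\tfrac12 z^2$ estimate in the proof of Proposition~\ref{prop_moving} specialised to $\bar v_o=\bar a_o=0$). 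A strictly positive lower bound on $d$ follows because, whenever $d$ is small, the term $-(v/d)\sin\phi$ dominates $\dot\phi$ as soon as $\phi$ is bounded away from $0$ and $\pi$, driving $\phi$ downward through $\pi/2$; once $\cos\phi>0$ we get $\dot d>0$ and $d$ recovers, so $d(t)\ge d_{\min}>0$ for all $t$ after a finite transient (during which $d$ may momentarily vanish, handled by Definition~\ref{defi} exactly as in Lemma~\ref{lemma1}). With $d_{\min}$ in hand, the uniform sign estimates $\dot\phi>0$ near $\phi=0$ and $\dot\phi<0$ near $\phi=\pi$ make $[\phi_0,\pi-\phi_0]$ positively invariant for a suitable $\phi_0\in(0,\pi/2)$, so eventually $\bm x(t)\in K:=[d_{\min},d_{\max}]\times[\phi_0,\pi-\phi_0]\Subset D$.

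Then I would rule out periodic orbits of \eqref{eq36}, and more generally any closed curve in $D$ made of solution arcs (e.g.\ a homoclinic loop through $\bm x_e$), via the Bendixson--Dulac criterion with multiplier $g(\bm x)=d$, exactly as in Case~(c) of Lemma~\ref{lemma1}. Since $\sat((d-r_d)/r_d)$ depends on $d$ only, its kink does not enter and
\begin{align*}
	\frac{\partial\bigl(g\,\dot d\bigr)}{\partial d}+\frac{\partial\bigl(g\,\dot\phi\bigr)}{\partial\phi}=-c_1\omega_c\,d\sin\phi<0\qquad\text{on }D .
\end{align*}
As $D$ is simply connected, Green's theorem then forbids any such closed solution curve \cite[Chapter~7.2]{Strogatz2018Nonlinear}. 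Finally, since the forward orbit lies in the compact set $K\Subset D$, which contains no equilibrium other than $\bm x_e$, no periodic orbit and no graphic, the Poincar\'e--Bendixson theorem \cite[Chapter~7.3]{Strogatz2018Nonlinear} forces its $\omega$-limit set to be $\{\bm x_e\}$; hence $\bm x(t)\to\bm x_e$. Together with the local Lyapunov stability from the first step, $\bm x_e$ is asymptotically stable, with region of attraction all of $D$ and hence, via Lemma~\ref{lemma1}, from any initial state.

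The main obstacle is the second step, specifically the uniform \emph{positive lower bound} on $d$: ruling out the vehicle spiralling arbitrarily close to the target is what makes the forward orbit precompact in $D$ and hence the Poincar\'e--Bendixson argument legitimate. The equilibrium and Jacobian computations and the Dulac identity are routine, mirroring the corresponding parts of the proof of Lemma~\ref{lemma1}.
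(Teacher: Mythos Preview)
Your route is genuinely different from the paper's. The paper builds an explicit Lyapunov function
\[
V(\bm x)=\int_{r_d}^{x_1}\frac{c_2}{v}\,\sat\!\left(\frac{\tau-r_d}{r_d}\right)d\tau+\int_{r_d}^{x_1}\left(\frac{1}{r_d}-\frac{1}{\tau}\right)d\tau+1-\sin x_2
\]
and concludes via LaSalle's invariance principle, whereas you go through Poincar\'e--Bendixson after excluding closed orbits with the Dulac multiplier $g=d$. Your equilibrium and Jacobian computations and the Dulac identity are correct, and your upper bound on $d$ is essentially fine.

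The obstacle you single out---a uniform positive lower bound on $d$---is exactly where the two approaches part ways. In the paper's $V$, the second integral contributes $\ln(r_d/x_1)$, which blows up as $x_1\to0^+$; together with the first integral (linear growth as $x_1\to\infty$), every sublevel set of $V$ restricted to $x_2\in[0,\pi]$ is compact in $D$, so precompactness of the forward orbit comes for free once $\dot V\le0$ is in hand. Your heuristic that $-(v/d)\sin\phi$ dominates and drives $\phi$ below $\pi/2$ before $d$ vanishes is the right intuition---indeed $\tfrac{d}{dt}(d\sin\phi)=d\,\omega\cos\phi$ with $|\omega|$ bounded makes $d\sin\phi$ a slow variable when $d$ is small, and this can be turned into a lower bound---but you have not carried it through, and it is the nontrivial part of your scheme. (Your fallback to the $V_z$ estimate from Proposition~\ref{prop_moving} would also need the extra hypothesis $c_2>2\omega_c$ from \eqref{condition2}, which Proposition~\ref{prop2} does not assume.) In exchange, the paper pays with a case split for $x_2\in(\pi/2,\pi]$ in which one subregion is handled not by showing $\dot V\le0$ but by arguing it is exited in finite time and never revisited. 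So the trade is: your argument is conceptually cleaner once the orbit is trapped in a compact subset of $D$, but trapping it is the crux; the paper's Lyapunov function traps for free, at the cost of a less uniform verification of $\dot V\le0$.
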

\begin{proof}
	By Lemma \ref{lemma1}, there exists a finite time instant $t_1$ such that $x_2(t) \in [0,\pi]$, $\forall t\ge t_1$.
	
	Consider the Lyapunov function candidate as 
	\begin{align*}
		V\left(\bm x \right) &=  \int_{r_d}^{x_1(t)} \frac{ c_2}{v} \sat\left(\frac{\tau -r_d}{r_d}\right)\text{d}\tau + 
		\int_{r_d}^{x_1(t)}\left( \frac{1}{r_d} -\frac{1}{\tau}\right) \text{d}\tau \\
		&~~~~+ 1- \sin x_2(t).
	\end{align*}
	Taking the time derivative of $V(\bm x)$ along with \eqref{eq5} leads to  
	\begin{align} \label{eq22}
		&\dot V\left(\bm x\right)\nonumber\\
		&=  c_2  \sat\left(\frac{x_1(t) -r_d}{r_d}\right) \cos x_2(t) +\left(\frac{v}{r_d}-\frac{v}{x_1(t)}\right)\cos x_2(t)\nonumber \\
		&~~~~- \left(\omega(t) -\frac{v \sin x_2(t)}{x_1(t)} \right) \cos x_2(t) \nonumber\\
		&= - v\cos x_2(t) 
		\left(  \frac{1}{x_1(t)} -\frac{\sin x_2(t)}{x_1(t)} +  \frac{c_1 }{r_d} \cos x_2(t)  \right ),
	\end{align}
where the last equality is obtained by substituting $\omega(t)$ in \eqref{eq222}.

	 If $x_2(t) \in [0,\pi/2]$, we have that $\cos x_2(t)\ge 0$ and $1-\sin x_2(t)\ge 0$. It follows from \eqref{eq22} that $\dot V(\bm x) \le 0.$  
	
	 If $x_2(t) \in (\pi/2,\pi]$, then $\cos x_2(t)< 0$. To determine the sign of $\dot V(\bm x)$, the following three cases are considered. 
	\begin{itemize}
		\item[(i)] For $x_1(t)\ge r_d$,  it follows from $c_1 > 1$  in (\ref{eqc}) that 
		$
		c_1 /r_d \cdot  \cos x_2(t)< 1/r_d\cdot \cos {x_2(t)} \le {1}/{x_1(t)}\cdot\cos {x_2(t)}.
		$
		Then, it holds that 
			\begin{align*}
				\dot V\left(\bm x\right)<& -v\frac{\cos x_2(t)}{x_1(t)}\left(1+\cos {x_2(t)} -\sin x_2(t)\right)\\
					                          =& -v\frac{\cos x_2(t)}{x_1(t)}\left(1+\sqrt{2}\cos(x_2(t)+\pi/4)\right)\\					                 
					                          \le& ~ 0,
				\end{align*}
			where the last inequality follows from that $3\pi/4< x_2(t)+\pi/4\le 5\pi/4$.
		\item[(ii)] For $v/(c_1 \omega_c)<x_1(t) < r_d$, it holds that
		$
		x_1(t) >  - \left( v(1 - \sin x_2(t)) \right)/\left( c_1 \omega_c\cos x_2(t)\right).
		$ Then
		\begin{align*}
			{1}/{x_1(t)} - {1}/{x_1(t)} \cdot \sin x_2(t) +  {c_1 /r_d}\cdot \cos x_2(t) < 0.
		\end{align*}
		Jointly with \eqref{eq22}, it can be easily verified that $\dot V(\bm x)<0.$
		\item[(iii)] For $0<x_1(t)\le v/(c_1 \omega_c)$, it follows from (\ref{eq36}) that 
		\begin{align*}
			\dot x_2(t) &\le \omega_c +    \frac{c_2 }{ r_d} ( {d(t)-r_d}) + c_1 \omega_c(\cos \phi(t)-\sin \phi(t)) \\ 
			&< \omega_c -c_1 \omega_c +  {c_2 }/{r_d}\cdot  ( {d(t)-r_d})\\
			&<(1 -c_1) \omega_c<0.
		\end{align*}
		This implies that $x_2(t)$ will enter the interval $[0,\pi/2]$ in a finite time. Moreover, $x_2(t) = \pi/2$ leads to that
		\begin{align*}
			\begin{cases}
				\dot x_2(t)~<~ 0, & \text{if}~ x_1(t) \in (0,r_d), \\
				\dot x_2(t)~=~0, & \text{if}~x_1(t) = r_d,\\
				\dot x_2(t)~>~ 0, & \text{if}~ x_1(t) \in (r_d,\infty).
			\end{cases}
		\end{align*}
		Thus, the vehicle states never return to  $0<x_1(t)\le v/c_1 \omega_c$ and $\pi/2<x_2\le\pi$.
		Eventually, $\dot V(\bm x) \le 0$.
	\end{itemize}
	
	However, $\dot V(\bm x)$ is not negative definite, and $\dot V(\bm x) =0$ if $x_2(t) = \pi/2$.  
	Let $\mathcal S=\{\bm x  | \dot V(\bm x) =0  \}$, and suppose that $\widetilde{\bm x}_e $ is an element of $\mathcal S$ except $\bm x_e$. Then 
	\begin{align*}
		\dot  x_2 |_{\bm x =\widetilde{ \bm x}_e}=\omega_c -{v}/{x_1(t)} +  {c_2}\sat \left(  e(t) \right)   \neq 0.
	\end{align*}
	That is, no solution can stay identically in $\mathcal S$ other than the trivial solution $\bm x(t) \equiv \bm x_e$.  Clearly, $V(\bm x)$ is nonnegative, and $V(\bm x)>0$, $\forall \bm x \neq \bm x_e$. By the LaSalle's invariance theorem \cite[Corollary 4.1]{Khalil2002Nonlinear}, $\bm x_e$ is an asymptotically stable equilibrium of the closed-loop system \eqref{eq5} under the controller \eqref{eq222}. \end{proof}

If a closed-loop system is locally exponentially stable near the equilibrium,  this system is robust against perturbations \cite[Chapter 9.2]{Khalil2002Nonlinear}.  

{\em Proof of Proposition \ref{prop2}}: By Lemma \ref{lemma2},  the closed-loop system (\ref{eq36}) has a globally stable equilibrium $\bm x_e$.
	Then, its model near this equilibrium is written as follows 
	\begin{equation} \label{eq23}
		\begin{split}
			\dot x_1(t) &= v \cos x_2(t), \\
			\dot x_2(t)	&=\omega_c + {c_1 \omega_c} \cos x_2(t) + \frac{c_2 }{r_d}(x_1(t)-r_d)      - \frac{v\sin x_2(t)}{x_1(t)}.
		\end{split}
	\end{equation}	
	
	Clearly, the linearized system of \eqref{eq23} at $\bm x_e$ is given by
	\begin{align} \label{eq34}
		\dot {\bm x} (t) = F (\bm x(t) -\bm x_e)
	\end{align}	
	where  $F=\bmatri 0 & -v \\  \frac{1}{r_d}{({c_2 }  + {\omega_c})}  &- {c_1 \omega_c} \ematri$  has two eigenvalues with negative real parts, i.e., $F$ is Hurwitz. Let $\mathcal D=\{ \bm x| V(\bm x)\le b \}$, where $b>0$. If $b$ is sufficiently small, then $d(t)$ is sufficiently close to $r_d$ and $\phi(t)$ is sufficiently close to $\pi/2$. Moreover, the closed-loop system in (\ref{eq23}) is continuously differentiable in $ \mathcal D$.  By  \cite[Corollary 4.3]{Khalil2002Nonlinear}, $\bm x_e $ is an exponentially stable equilibrium for the closed-loop system in (\ref{eq5}). Thus, there exists a finite time instant $t_1$ such that $\bm x(t) \in \mathcal D$ for all $t>t_1$. 
	
	Let $F = Q\Lambda Q^{-1}$ and $\Lambda= \diag( \lambda_1,  \lambda_2 )$ where $\lambda_i$, $i=1,2$ are the eigenvalues of matrix $F$. Then, it holds that
	
	\begin{align*}
		\twon{\bm x(t) -\bm x_e } &= \twon{ Q\exp(\Lambda (t-t_1))Q^{-1} (\bm x(t_1)-\bm x_e)} \\
		&\le C\twon{\bm x(t_1)-\bm x_e} \exp(- \rho (t-t_1)), ~\forall t\ge t_1
	\end{align*}
	where $C=\twon{Q}\twon{Q^{-1}}$,  $\Delta=(c_1 \omega_c)^2 -4 (c_2 \omega_c  + \omega_c^2)$, and 
	\begin{align*}
		\rho=
		\begin{cases}
			(c_1 \omega_c - \sqrt{\Delta})/2, &~\text{if}~ \Delta > 0,\\
			c_1 \omega_c/2, &~\text{if}~ \Delta \le 0.
		\end{cases}
	\end{align*}
	
	Thus, the proof is completed.\qed


\subsection{Proof of Proposition 2} \label{suba2}

\begin{lemma}  \label{lemma_moving}
	Under the conditions in Proposition \ref{prop_moving}, there is a finite time instant $t_1\ge t_0$ such that $\phi(t_1)\in[\arccos((q_1+\bar v_o)/v),~\pi-\arccos(-v_*-\bar v_o-q_1)/v)]$ and $d(t_1)>2r_d$, where $q_1=c_2 r_d /c_1$ and  $0<v_*<v-\bar v_o -q_1$.
\end{lemma}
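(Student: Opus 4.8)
\emph{Proof proposal.} The strategy is to mirror the case analysis of Lemma~\ref{lemma1}, the new ingredient being a decoupling of time scales: since $|\dot d(t)| = |v\cos\phi(t) - v_1(t)| \le v + \bar v_o$, the hypothesis $d(t_0) > 2r_d + (v+\bar v_o)T_1$ forces $d(t) > 2r_d$ for all $t \in [t_0, t_0 + T_1]$, hence on that window the curvature terms in \eqref{eq19} are uniformly small, $v/d(t) < \omega_c/2$ and $|v_2(t)|/d(t) < \omega_o/2$. So it is enough to show that the closed-loop angle enters the target interval $I$ within time $T_1$; the assertion $d(t_1) > 2r_d$ then follows from $d(t_1) \ge d(t_0) - (v+\bar v_o)T_1 > 2r_d$. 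First I would substitute \eqref{eq222} into \eqref{eq19} to obtain \eqref{eq63}, and verify, using \eqref{condition1}--\eqref{condition2} and $d(t) > 2r_d$, that the field points inward on $\partial I$: $\dot\phi(t) > (c_1/r_d)(q_1/2) - \omega_c - 2\omega_o > 0$ at the lower endpoint and $\dot\phi(t) < -(c_1/r_d)v_* + \omega_c + \omega_o < 0$ at the upper endpoint. Thus $I$ is forward invariant and the lemma reduces to the case $\phi(t_0) \notin I$.

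For $\phi(t_0) \notin I$ I would split the complement into the same two regimes as in Lemma~\ref{lemma1}. In the regime where $\cos\phi$ is large enough that $\dot d(t) \ge 0$, conditions \eqref{condition1}--\eqref{condition2} make the proportional and derivative actions in \eqref{eq63} dominate, so $\dot\phi(t)$ is bounded below by a positive constant $\underline{\omega}$ depending only on $c_1,c_2,\omega_c,v,r_d,\bar v_o,\bar a_o$ as long as $\phi \notin I$ and $d(t) > 2r_d$; hence $\phi$ increases monotonically into $I$ within time at most $2\pi/\underline{\omega}$. In the complementary regime, where $\phi$ is near $\pm\pi$ and $\dot\phi$ may change sign, I would argue as in Case~(b): there $\cos\phi$ is negative enough that $\dot d(t) \le -(v\sin\varepsilon_0 - \bar v_o) < 0$ for a fixed $\varepsilon_0$ with $\sin\varepsilon_0 > \bar v_o/v$, so this regime can be left only by $\phi$ reaching a neighbourhood of $I$ or by $d$ falling to $2r_d$; the latter is impossible before time $T_1$, so the former must occur, again within a time bounded by the parameters. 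Summing the finitely many per-phase estimates --- each controlled by a lower bound on $|\dot\phi|$ or an upper bound on $\dot d$ involving only the parameters and the maneuver bounds --- produces the explicit constant $T_1 = T_1(c_1,c_2,\omega_c)$ (also depending on $v,r_d,\bar v_o,\bar a_o$) referred to in the statement.

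The main obstacle, relative to Lemma~\ref{lemma1}, is that the planar $(\dot d,\dot\phi)$ system is now non-autonomous, carrying the exogenous signals $v_1(t),v_2(t),\dot v_1(t)$ (bounded by $|v_i|\le\bar v_o$ and, when $d>2r_d$, $|\dot v_1|\le \bar a_o + \bar v_o(v+\bar v_o)/(2r_d)$), so the Poincar\'e--Bendixson and Dulac arguments that excluded trapped trajectories in Case~(c) of Lemma~\ref{lemma1} are unavailable. The large-initial-distance hypothesis is exactly what removes the need for them: on $[t_0,t_0+T_1]$ the state never enters the small-$d$ regime, so Case~(c) is vacuous and every surviving region succumbs to a direct monotonicity or sign estimate on $\dot\phi$. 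The one delicate point that remains is to check that the number of regime changes before $\phi$ settles in $I$ is bounded independently of $d(t_0)$, so that $T_1$ depends only on the controller parameters and the target's speed and acceleration bounds; granting that, $T_1$ is obtained simply by adding the elementary time estimates.
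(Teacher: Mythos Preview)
Your overall strategy---keep $d(t)>2r_d$ on $[t_0,t_0+T_1]$ via $|\dot d|\le v+\bar v_o$, verify forward invariance of $I$, then push $\phi$ into $I$ by a case analysis on the complement---matches the paper. The gap is in the ``near $\pm\pi$'' regime. There you argue that $\dot d(t)<0$, so the trajectory exits either by $\phi$ reaching a neighbourhood of $I$ or by $d$ hitting $2r_d$, and you rule out the latter via the large-initial-distance hypothesis. But this is circular and, more fundamentally, yields no time estimate: the hypothesis $d(t_0)>2r_d+(v+\bar v_o)T_1$ only keeps $d$ above $2r_d$ \emph{for time $T_1$}, and $T_1$ is precisely the constant you are assembling from the per-phase bounds. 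Knowing $\dot d<0$ gives no lower bound on $|\dot\phi|$; in a non-autonomous system nothing prevents $\phi$ from lingering near $\pm\pi$ while $d$ drifts down for a time proportional to $d(t_0)-2r_d$. The ``delicate point'' you flag at the end is therefore not about counting regime changes: even a single pass through this regime has no parameter-only duration bound under your argument, so there is nothing to sum.

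The paper avoids the $\dot d$ detour entirely and bounds $\dot\phi$ directly on each arc of the complement of $I$, using only $d(t)\ge 2r_d$ (so that $\sat(e(t))=1$ and $c_1z(t)=c_1(v\cos\phi(t)-v_1(t)+q_1)/r_d$) together with \eqref{condition1}--\eqref{condition2}. On the arc through $\pm\pi$ it obtains $\dot\phi(t)<-(c_1 v_*/r_d-\omega_c-\omega_o)<0$---the same inequality you already verified at the upper endpoint of $I$---which gives an explicit traversal time $T_3$. A companion bound on $|\dot\phi|$ handles the remaining arc in the lower half-plane, yielding $T_4$, and $T_1=T_3+T_4$ then depends only on $c_1,c_2,\omega_c,\omega_o,v_*$ and not on $d(t_0)$. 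In short, your belief that ``$\dot\phi$ may change sign'' near $\pm\pi$ is exactly what \eqref{condition1} is engineered to prevent; once you use it, the Case-(b)-style argument borrowed from Lemma~\ref{lemma1} and the circularity it introduces become unnecessary.
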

\begin{proof}
	If $\phi(t_0)\in[\arccos((q_1+\bar v_o)/v),~\pi-\arccos((-v_*-\bar v_o-q_1)/v)]$, the proof is finished. Thus, we only need to analyze the case that $\phi(t_0)$ does not belong to this interval.
	
	When $d(t)\ge2 r_d$,  
	it follows from \eqref{eq63} that
	\begin{align} \label{eqa63}
		\dot\phi(t) =~ & \omega_c+ {c_1}/{r_d}\cdot \left(v \cos\phi(t) - v_1(t)+   q_1 \right) \nonumber \\
		 & -{v\sin \phi(t)}/{d(t)}  + {v_2(t)}/{d(t)}.
	\end{align}
	
	\begin{itemize}	
		\item[(i)] If $\phi(t) \in(-\arccos( (\bar v_o-q_1)/v),~\arccos((q_1+\bar v_o)/v))$, then \eqref{eqa63} leads to that  
		$
		\dot\phi(t) >~ (v-\bar v_o)/{2r_d}+  c_1/r_d \cdot(\bar v_o- v_1(t) )>0.
		$
		\item[(ii)] If $\phi(t) \in (\pi-\arccos((-v_*-\bar v_o-q_1)/v),\pi]\cap (-\pi, -\pi+\arccos((-v_*-\bar v_o-q_1)/v)]$,  then it follows from \eqref{eq63} and the conditions in Proposition \ref{prop_moving} that
		$
		\dot\phi(t) 
		<- c_1/r_d \cdot v_* +\omega_c   + \omega_o<0.
		$
		Moreover, the maximum time for $\phi(t)$ to cross the boundary of $\pi-\arccos((-v_*-\bar v_o-q_1)/v)$ is given as
		\begin{align*}
			T_3 = \frac{2\arccos((-v_*-\bar v_o-q_1)/v)}{c_1/r_d \cdot v_* -\omega_c   - \omega_o}.
		\end{align*} 
		\item[(iii)] If $\phi(t) \in(-\pi+\arccos((-v_*-\bar v_o-q_1)/v),-\arccos( (\bar v_o-q_1)/v))$, it follows from \eqref{eqa63} and Lemma \ref{lemma1} that $\phi(t)$ either reduces to case (i) or case (ii) in a finite time. That is 
		\begin{align*}
			T_4 = \max\left\{\frac{\pi}{c_1/r_d \cdot v_* -\omega_c   - \omega_o}, ~\frac{\pi}{c_2 -c_1 \omega_o}\right\}.
		\end{align*}
		
		
	\end{itemize}

	Thus, there is a finite time instant $t_1\ge t_0$ such that  $\phi(t_1)\in[\arccos( (\bar v_o+q_1)/v),\pi-\arccos((-v_*-\bar v_o-q_1)/v)]$. 
	
	Moreover, case (i) implies that $\dot d(t)>0$ by \eqref{eq19}, and only cases (ii)-(iii) may result in the decrease of $d(t)$. Thus, it holds that $d(t_1)>2r_d$ by $d(t_0)>2r_d + (v+\bar v_o)T_1$ where $T_1=T_3+T_4$.
\end{proof}

\section*{Acknowledgment}
The authors would like to thank Mr. Bo Yang from Tsinghua University for his great support on the experimental tests with the Racecar.

\bibliographystyle{IEEEtran}
\bibliography{bib/mybib}

\begin{IEEEbiography}
	[{\includegraphics[width=1in,height=1.25in,clip,keepaspectratio]{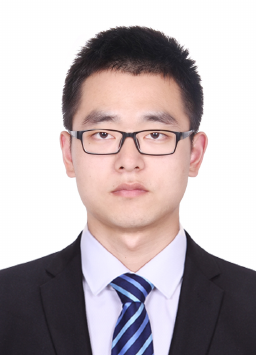}}]
	{Fei Dong} received the B.S. degree from the School of Control Science and Engineering, Shandong University, Jinan, China, in 2014 and the M.S. degree from the School of Instrumentation and Optoelectronic Engineering,  Beihang University, Beijing, China, in 2017. He is currently pursuing the Ph.D. degree at the Department of Automation, Tsinghua University, Beijing, China. His research interests include path planning, motion control, and learning-based control.
\end{IEEEbiography}
\begin{IEEEbiography}
	[{\includegraphics[width=1in,height=1.25in,clip,keepaspectratio]{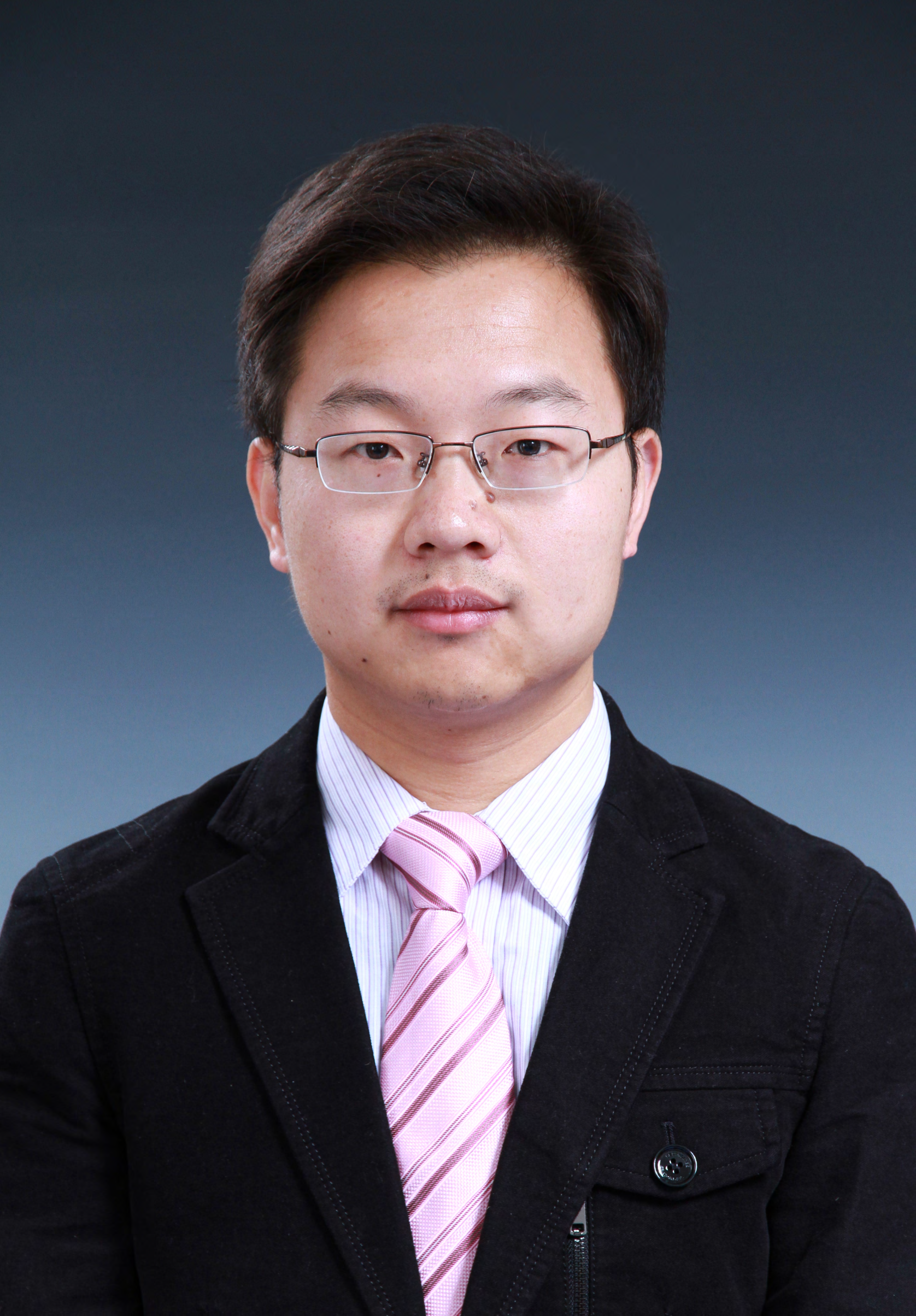}}]
	{Keyou You} (Senior Member, IEEE) received the B.S. degree in Statistical Science from Sun Yat-sen University, Guangzhou, China, in 2007 and the Ph.D. degree in Electrical and Electronic Engineering from Nanyang Technological University (NTU), Singapore, in 2012. After briefly working as a Research Fellow at NTU, he joined Tsinghua University in Beijing, China where he is now a tenured Associate Professor in the Department of Automation. He held visiting positions at Politecnico di Torino, Hong Kong University of Science and Technology, University of Melbourne and etc. His current research interests include networked control systems, distributed optimization and learning, and their applications.
	
	Dr. You received the Guan Zhaozhi award at the 29th Chinese Control Conference in 2010 and the ACA (Asian Control Association) Temasek Young Educator Award in 2019. He received the National Science Fund for Excellent Young Scholars in 2017. He is serving as an Associate Editor for IEEE Transactions on Control of Network Systems, IEEE Transactions on Cybernetics, IEEE Control Systems Letters(L-CSS), Systems $\&$ Control Letters.
\end{IEEEbiography}

\begin{IEEEbiography}
	[{\includegraphics[width=1in,height=1.25in,clip,keepaspectratio]{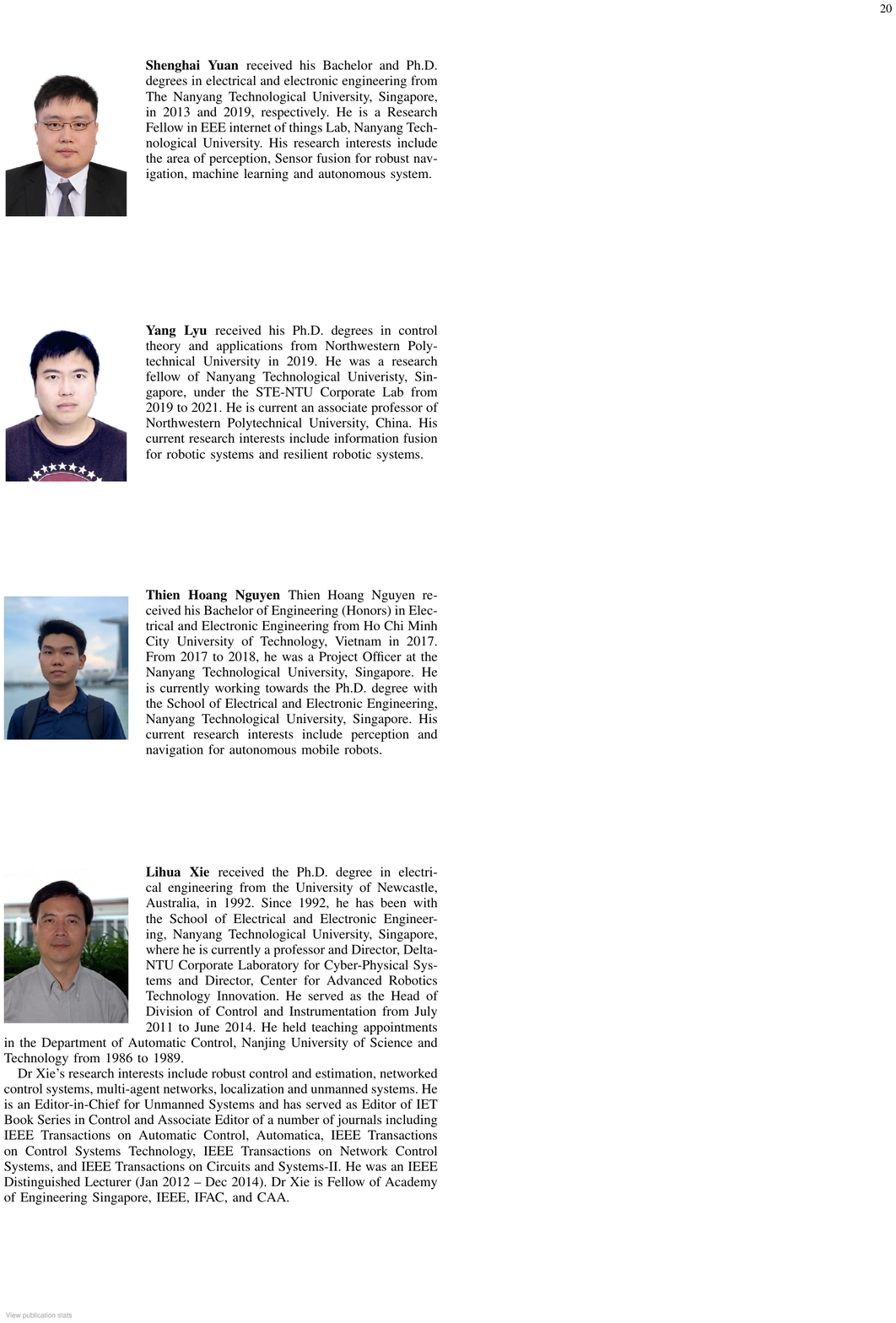}}]
	{Lihua Xie} (Fellow, IEEE)  received the Ph.D. degree in electrical engineering from the University of Newcastle, Australia, in 1992. Since 1992, he has been with the School of Electrical and Electronic Engineering, Nanyang Technological University, Singapore, where he is currently a professor and Director, Delta-NTU Corporate Laboratory for Cyber-Physical Systems and Director, Center for Advanced Robotics	Technology Innovation. He served as the Head of Division of Control and Instrumentation from July 2011 to June 2014. He held teaching appointments
	in the Department of Automatic Control, Nanjing University of Science and Technology from 1986 to 1989.
	
	Dr. Xie's research interests include robust control and estimation, networked control systems, multi-agent networks, localization and unmanned systems. He is an Editor-in-Chief for Unmanned Systems and has served as Editor of IET	Book Series in Control and Associate Editor of a number of journals including IEEE Transactions on Automatic Control, Automatica, IEEE Transactions on Control Systems Technology, IEEE Transactions on Network Control Systems, and IEEE Transactions on Circuits and Systems-II. He was an IEEE
	Distinguished Lecturer (Jan. 2012 - Dec. 2014). Dr Xie is Fellow of Academy of Engineering Singapore, IEEE, IFAC, and CAA. 
\end{IEEEbiography}

\begin{IEEEbiography}
	[{\includegraphics[width=1in,height=1.25in,clip,keepaspectratio]{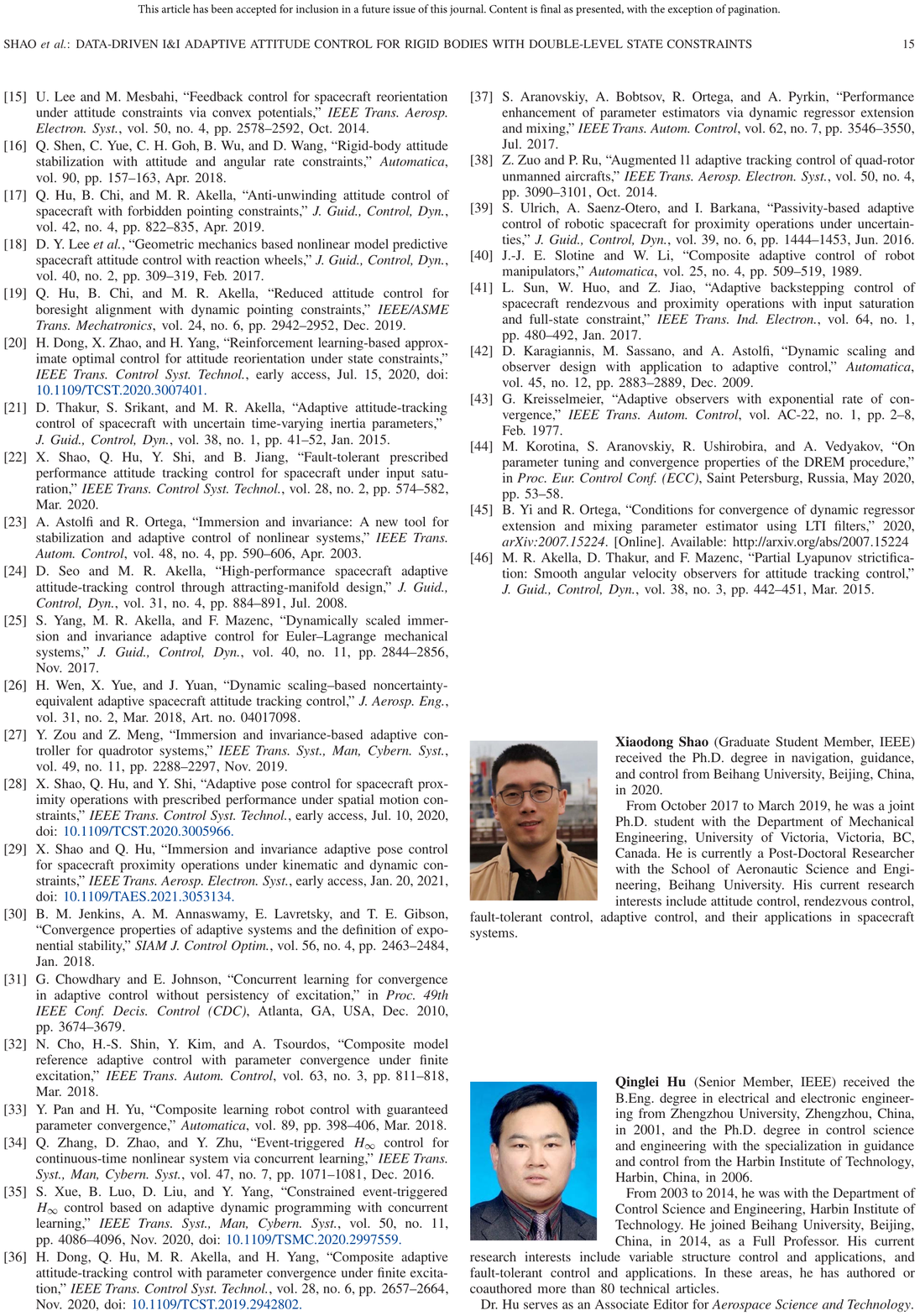}}]
	{Qinglei Hu} (Senior Member, IEEE) received the B. Eng. degree in electrical and electronic engineering from Zhengzhou University, Zhengzhou, China, in 2001, and the Ph.D. degree in control science and engineering with the specialization in guidance
	and control from the Harbin Institute of Technology, Harbin, China, in 2006.
	
	From 2003 to 2014, he was with the Department of Control Science and Engineering, Harbin Institute of Technology. He joined Beihang University, Beijing, China, in 2014, as a Full Professor. His current research interests include variable structure control and applications, and fault-tolerant control and applications. In these areas, he has authored or coauthored more than 80 technical articles.
	
	Dr. Hu is serving as an Associate Editor for Aerospace Science and Technology. 
\end{IEEEbiography}

\end{document}